\tikzset{%
  middle dotted line/.style={
    decoration={show path construction, 
      lineto code={
          \draw[#1] (\tikzinputsegmentfirst) --($(\tikzinputsegmentfirst)!.3333!(\tikzinputsegmentlast)$);,
          \draw[dotted,#1] ($(\tikzinputsegmentfirst)!.3333!(\tikzinputsegmentlast)$)--($(\tikzinputsegmentfirst)!.6666!(\tikzinputsegmentlast)$);,
          \draw[#1, ->] ($(\tikzinputsegmentfirst)!.6666!(\tikzinputsegmentlast)$)--(\tikzinputsegmentlast);,
      }
    },
    decorate
  },
}
\tikzset{smalln/.style={state, minimum size=0pt, inner sep=1pt, fill=white, draw=black}}
\newenvironment{customthm}[1]
  {\innercustomthm}
  {\endinnercustomthm}
\def\rddots#1{\cdot^{\cdot^{\cdot^{#1}}}}
\newcommand{\set}[1]{\{ #1 \}}
\newcommand{\var}[1]{\texttt{\symbol{"5C}}#1}
\newcommand{\tuple}[1]{\langle #1 \rangle}
\newcommand{\nlog}{\textbf{\textsc{{NLOG}}}}
\newcommand{\np}{\textbf{NP}}
\newcommand{\pspace}{\textbf{PSPACE}}
\newcommand{\rewbl}{\textsc{REWBLk}}
\newcommand{\rewb}{REWB}
\newcommand{\prewb}{\textsc{REWB}(+)}
\newcommand{\nrewb}{\textsc{REWB}(\text{$-$})}
\newcommand{\indexed}{\textbf{IL}}
\newcommand{\ilog}[1]{\lceil \log #1 \rceil}
\newcommand{\rewblang}{\textbf{REWB}}
\newcommand{\rewbllang}{\textbf{\textsc{REWBLk}}}
\newcommand{\oraclem}{\textbf{OLOG}}
\newcommand{\depth}{\mathit{depth}}
\newcommand{\sem}[1]{\llbracket #1 \rrbracket}
\DeclareMathSymbol{\Gamma}{\mathalpha}{operators}{0}
\DeclareMathSymbol{\Sigma}{\mathalpha}{operators}{6}
\title{Regular Expressions with Backreferences and Lookaheads Capture NLOG}
\author{Yuya Uezato}{CyberAgent Inc., Japan \and National Institute of Informatics, Japan}{uezato\_yuya@cyberagent.co.jp}{https://orcid.org/0009-0005-8834-010X}{}
\authorrunning{Y. Uezato}
\keywords{Regular Expression, Automata Theory, Nondeterministic Log-Space}
\begin{document}

\maketitle

\begin{abstract}
Backreferences and lookaheads are vital features to make classical regular expressions (REGEX) practical.
Although these features have been widely used, understanding of the unrestricted combination of them has been limited.
Practically, most likely no implementation fully supports them.
Theoretically, while some studies have addressed these features separately, few have dared to combine them.
In those few studies, it has been made clear that the amalgamation of these features renders REGEX significantly expressive.
However, no acceptable expressivity bound for \rewbl{}---REGEX with backreferences and lookaheads---has been established.
We elucidate this by establishing that \rewbl{} coincides with \nlog, the class of languages accepted by log-space nondeterministic Turing machines (NTMs).
In translating REWBLk to log-space NTMs, negative lookaheads are the most challenging part since it essentially requires complementing log-space NTMs in nondeterministic log-space.
To address this problem, we revisit Immerman--Szelepcs{\'e}nyi theorem.
In addition, we employ log-space nested-oracles NTMs to naturally handle nested lookaheads of \rewbl.
Utilizing such oracle machines, we also present the new result that the membership problem of \rewbl{} is \pspace-complete.
\end{abstract}

\section{Introduction}\label{sec:introduction}

Backreferences and lookaheads are practical extensions for classical regular expressions (REGEX).
REGEX with backreferences---\emph{REWB}---can represent the non context-free language $L = \set{ w \# w : w \in \set{a, b}^*}$  with the REWB expression $E = ((a+b)^*)_x\ \#\ \var{x}$.
Roughly telling about this expression,
we save a substring matched with $(a+b)^*$ into the variable $x$, and later, we refer back to the matched string using $\var{x}$; therefore, $E$ represents $L$.
REWB is a classical calculus~\cite{Aho:1990}, and there are some results:
\begin{enumerate}
\item Schmid showed that  \rewblang, the class of languages accepted by REWB, is contained in \nlog, the class of languages accepted by log-space nondeterministic Turing machines (NTMs)~\cite[Lemma~18]{Schmid:2013}.
\item Recently, Nogami and Terauchi showed that \rewblang{} is contained in the class of indexed languages, \indexed{} (an extension of context-free languages~\cite{Aho:1968})~\cite{Nogami:2023}.
\item The membership problem of REWB---for an input REWB expression $E$ and an input word w, deciding if $E$ accepts $w$---is \np-complete.
\end{enumerate}

On REGEX with lookaheads, there are two kinds of lookaheads.
A \emph{positive} lookahead $?(E) F$ checks if the rest of the input can be matched by $E$ \emph{without} consuming the input. If so, we continue to $F$ in the usual manner, i.e., with consuming the input.
A \emph{negative} lookahead $!(E) G$ checks if the rest of the input \emph{cannot} be matched by $E$ without consuming the input. If so, we continue to $G$ in the usual manner.
For example, the expression $?(E)F + !(E)G$ can be read as $\texttt{if}\ E\ \texttt{then}\ F\ \texttt{else}\ G$ and is helpful in writing practical applications.
Although lookaheads are useful, they do not alter the REGEX expressiveness.
This fact immediately follows from the result of alternating Turing machines that the language class of alternating finite automata corresponds to that of usual finite automata~\cite{Chandra:1981}.

Now, it is a natural question: \emph{How expressive are REWB with lookaheads?}

The class REWB with lookaheads---\textsc{REWBLk}---was studied by Chida and Terauchi~\cite{Chida:2022,Chida:2023}.
They have shown
(a) \rewbllang, the class of languages accepted by \rewbl, is closed under intersection and complement (it is immediately shown using lookaheads);
(b) REWB is a proper subclass of \prewb{}---REWB with positive lookaheads---and \nrewb{}---REWB with negative lookaheads;
(c) the language emptiness problem of \nrewb{} is undecidable.
They also developed a new class of automata called \emph{positive lookaheads memory automata (PLMFA)} and proved that \prewb{} equals PLMFA.

However, the following two key questions remain unresolved: \\
(i) Which known language classes are related to \rewbllang? \\
(ii) What is the computational complexity of the membership problem of \rewbl?

We solve these questions by presenting the following tight results:
\begin{enumerate}
\item[(I)] $\rewbllang = \nlog$, the class of languages accepted by \emph{log-space} \emph{nondeterministic} Turing machines (NTMs).
We also show that \rewblang{} contains an \nlog-complete language.
\item[(II)] The membership problem of \rewbl{} is \pspace-complete.
\end{enumerate}

Together with existing results, our results are summarized in the following table:\\
\begin{tabular}{c|c|c}
& Language Class  & Membership Problem  \\\hline
REGEX+lookaheads & = \textbf{Regular}~\cite{Chandra:1981} & \textbf{P}-complete~\cite{Jiang:1991} \\\hline
REWB & \begin{tabular}{cc}
$\subseteq \nlog$~\cite{Schmid:2013}, incomparable to \textbf{CFL}~\cite{Berglund:2023,Campeanu:2003}, \\\
$\subseteq \textbf{IL}$~\cite{Nogami:2023}, $\ni \nlog$-complete language (I)
\end{tabular} & \textbf{NP}-complete~\cite{Aho:1990} \\\hline
\rewbl & $= \nlog$ (I) & \pspace-complete (II) \\\hline
\end{tabular}

\subsection{Difficulty in Translating \rewbl{} to Log-space NTMs}\label{sec:intro:challenge}

To investigate a hard part of translation from \rewbl{} to log-space NTMs, let us consider the following language, which is a well-known \nlog-complete language:
\[
L_{\text{reach}} = \set{ s\,\#\,x_1 \to y_1 \# \cdots \# x_n \to y_n \#\,t : s, t, x_i, y_i \in V^*, \text{and there is a path from $s$ to $t$} }
\]
where the part $x_1\to y_1\#\cdots\#x_n\to y_n\#$ means the directed graph with direct edges $x_1 \to y_1$, $x_2 \to y_2$, and so on.
The following \rewbl{} expression $E_{\text{reach}}$ recognizes $L_{\text{reach}}$:
\[
E_{\text{reach}} = (V^*)_{\textsc{Cur}}\,\#\ \bigl(?(\Sigma^*\,\#\,\var{\textsc{Cur}}\to(V^*)_{\textsc{Cur}}\,\#)\bigr)^*\,\Sigma^*\,\#\,\var{\textsc{Cur}}
\]
where $\Sigma = V \cup \set{ \#, \to }$.
It first captures $s$ into the variable \textsc{Cur} and walks on graphs while repeatedly evaluating the part $?(\cdots)$.
Each evaluation makes a nondeterministic one-step move on the graph. The part $\Sigma^*\,\#\,\var{\textsc{Cur}}$ checks if we reach the goal $t$.

It is easy to structurally translate $E_{\text{reach}}$ to a log-space NTM $M$ such that $L(M) = L(E_{\text{reach}})$.
Now, using $M$, let us translate the if-then-else expression $?(E_{\text{reach}})F\,+\,!(E_{\text{reach}})G$ for some expressions $F$ and $G$ to a log-space NTM $N$.
Let $w$ be an input word $s \# \mathit{edges} \# t$.
For the part $?(E_{\text{reach}})F$, we can directly run $M$ in $N$, and if we eventually reach $t$ by running $M$ from $s$, then we continue to $F$ in $N$.

However, for the other part $!(E_{\text{reach}})G$, we encounter problems:
\begin{description}
\item[\textit{\textbf{(A)}}] We need to check if all possible walks of $M$ starting from $s$ do not reach $t$.
\item[\textit{\textbf{(B)}}] Walking paths starting from $s$ and aiming for $t$ become infinitely long, and thus there are infinitely many walks (branching) to be checked.
\end{description}
Therefore, we cannot run $M$ directly in $N$ to handle the negative lookahead $!(E_{\text{reach}})$.

\paragraph*{Our Idea: Immerman--Szelepcs{\'e}nyi Theorem \& Log-space Nested-oracles NTMs}
To address the above problems, we leverage Immerman--Szelepcs{\'e}nyi (I-S) theorem~\cite{Immerman:1988,Szelepcsenyi:1988}.
This theorem states that the class \nlog{} is closed under complement; i.e., there exists a log-space NTM $\overline{M}$ such that $L(\overline{M}) = \Sigma^* \setminus L(M)$.
Therefore, in our machine $N$, we run $\overline{M}$ for the part $!(E_{\text{reach}})G$: If $\overline{M}$ eventually accepts $w$, then we proceed to  simulate $G$.

On the other hand, \rewbl{} permits nested lookaheads, such as $?(\cdots !(\cdots ?(\cdots) \cdots) \cdots)$.
To handle them naturally,
we employ log-space NTMs with \emph{nested oracles}~\cite{Immerman:1988,Schoning:1988,Lange:1987}.
These machines can easily simulate \rewbl{} and are translated to log-space NTMs by the (I-S) theorem; so, they lead us to $\rewbllang = \nlog$.
Moreover, oracle machines are crucial to showing that the membership problem of \rewbl{} is \pspace-complete.
To this end, we also give the new result that the membership problem of such machines is in \pspace.

\paragraph*{Structure of Paper}
The rest of the paper is structured as follows.
Section~\ref{sec:relwork} discusses related work.
Section~\ref{sec:rewbl} reviews \rewbl{} and demonstrates that $\rewblang$ already contains an \nlog-complete language.
Section~\ref{sec:expressiveness} illustrates the expressiveness of \rewbl{}:
(1) $\nlog \subseteq \rewbllang$;
(2) the membership problem of \rewbl{} is \pspace-hard;
(3) \prewb{} and \nrewb{} represent languages $\not\in$ \indexed{}; and
(4) the emptiness problems of \prewb{} and \nrewb{} are undecidable even if $\Sigma = \set{ a }$.
Section~\ref{sec:nested-oracle-logspace-ntm} reviews log-space nested-oracles NTMs and their language class (= \nlog{}),
and shows our new result: their membership problem is in \pspace.
Section~\ref{sec:rewbl-to-nlog} establishes  $\rewbllang \subseteq \nlog$ and that the membership problem of \rewbl{} is in \pspace.
Section~\ref{sec:conclusion} concludes this paper by giving open problems.

\section{Related Work}\label{sec:relwork}

As discussed in the Introduction, Chida and Terauchi have formalized \rewbl{} and its semantics~\cite{Chida:2022,Chida:2023}.
To our knowledge, their study is the first theoretical exploration into the simultaneous treatment of backreferences and lookaheads.
Surprisingly, there has been no prior theoretical research on the topic despite their longstanding and widespread use.
They introduced PLMFA (positive lookahead MFA) by expanding MFA (memory finite automata), which Schmid presented for studying REWB in~\cite{Schmid:2016}.
One of their main result is the equivalence of PLMFA to \prewb{}, established through translations between PLMFA and \rewbl.
Nevertheless, they did not address (i) a relationship between \rewbl{} and existing known language classes; (ii) the complexity of the membership problem of \rewbl.
In contrast, we show that \rewbl{} captures \nlog{} and the membership problem of \rewbl{} is \pspace-complete.

As highlighted in the Introduction, for REWB, Schmid showed $\rewblang \subseteq \nlog$~\cite{Schmid:2013} and Nogami and Terauchi showed $\rewblang \subseteq \indexed$~\cite{Nogami:2023}.
Schmid also introduced a class MFA and showed that MFA corresponds to REWB~\cite{Schmid:2016}.
About the relationship between $\nlog$ and $\indexed$, it is worthy noting that:
\begin{itemize}
\item $\nlog \not\subseteq \indexed$. This is because that the language $L_{2^{2^n}} = \set{ a^{2^{2^n}} : n \in \mathbb{N} }$ clearly belongs to $\nlog$; however, $L_{2^{2^n}} \notin \indexed$, which is shown by pumping lemmas for indexed languages~\cite{Hayashi:1973,Gilman:1996}.
\item $\indexed \not\subseteq \nlog$ unless $\nlog = \np$. This is because that \indexed{} can represent the language $L_{\text{3SAT}}$, whose words are true 3-SAT formulas~\cite{Rounds:1973}; however, $L_{\text{3SAT}} \notin \nlog$ unless $\nlog = \np$.
\end{itemize}
In the present paper, we take their result further and show that $\rewbllang = \nlog$ and that $\rewblang$ already contains an \nlog-complete language.

We also refer to modern REGEX engines that (partially) support both backreferences and lookaheads.
Several programming languages (for example, Perl, Python, PHP, Ruby, and JavaScript) and .NET framework support these features.
However, their support is limited in both syntax and semantics.
First, expressions like $( \var{x} \var{x} )_{x}$ and $(\var{x} + \var{x})_{x}$ are rejected by most implementations because the variable $x$ appears more than once in single captures.
Next, in most implementations, the expression $F = (?((\var{x}a)_x))^* \var{x}$, which represents $\set{ \epsilon, a, a^2, \ldots }$, does not match with $a^2$ and $a^3$, so on.
It is due to conservative loop-detecting semantics. Such a semantics is standardized in ECMAScript~\cite{ecmascript}.
This semantics works for $F$ as follows.
First, it unfolds the Kleene-* of  $(?((\var{x}a)_x))^*$ as $(\epsilon\ +\ \underline{?((\var{x}a)_x)} (?((\var{x}a)_x))^*)$.
Next, it enters the underline part and updates the variable $x$ as $\epsilon \mapsto a$ without consuming any input characters.
Then, we try to evaluate $(?((\var{x}a)_x))^*$ \emph{again} at the same input position.
At this point, many REGEX engines think that we enter an infinite loop and so stop unfolding the {Kleene-*}.
Consequently, $F$ only matches with $\epsilon$ (without loop unfolding) and $a$ (with a single loop unfolding).

We can rephrase this situation as follows:
(1) the amalgamation of lookaheads with variables induces side effects without consuming any characters;
(2) however, the loop-detecting semantics overlooks such side effects and changes behaviors from the naive semantics.
On the other hand, such conservative semantics work well for REGEX, REGEX with lookaheads, or REWB since they do not induce such side effects.

This paper presents a translation between {\rewbl}s and log-space NTMs, and it enables to develop REGEX engines that fully support backreferences and lookaheads. 
Especially, such engines run in \emph{polynomial time} (for a fixed expression) since $\nlog \subseteq \textbf{P}$.

\section{Preliminaries: \rewbl}\label{sec:rewbl}
We review the syntax and semantics of \rewbl{}~\cite{Chida:2022,Chida:2023} step-by-step below.

\subsection{Regular Expressions with Backreferences and Lookaheads}\label{sec:rewbl-defs}
We first give the syntax of \rewbl{} over an alphabet $\Sigma$ and variables $\mathcal{V}$:
\[
\begin{array}{lcl}
E  & ::= & \sigma\ \mid\ \epsilon\ \mid\ E + E\ \mid\ E\,E\ \mid\ E^* \\
& \mid & \underbrace{(E)_v}_{\text{capture}} \mid \underbrace{\var{v}}_{\text{backreference}} \mid \underbrace{?(E)}_{\text{positive lookahead}} \mid \underbrace{!(E) }_{\text{negative lookahead}}
\end{array}
\]
where $\sigma \in \Sigma$ and $v \in \mathcal{V}$ is a variable.
The first line defines classical regular expressions, REGEX.
We consider the following subclasses in this paper:
\emph{REWB} (REGEX with captures and backreferences),
\emph{\prewb} (REWB with positive lookaheads $?(E)$),
\emph{\nrewb} (REWB with negative lookaheads $!(E)$).

\paragraph*{Semantics of REGEX}
We first give a semantics for REGEX.
To accommodate variables and lookaheads, configurations for REGEX are 4-tuples $\tuple{R, w, p, \Lambda}$ where
\begin{itemize}
\item $R$ is an expression to be executed;
\item $w$ is an input string and will not change throughout computation;
\item $p$ is a 0-origin position on $w$ ($0 \leq p \leq |w|$). We write $w[p]$ for the symbol on the position $p$.
It should be noted that $p = |w|$ is allowed to represent that we consume all the input.
\item $\Lambda : \mathcal{V} \to \Sigma^*$ is an assignment from variables $\mathcal{V}$ to substrings of $w$.
\end{itemize}

We write $\mathcal{C}$ for the set of configurations.
To denote all the results obtained by computing,
we use a semantic function
$
\sem{\cdot} :: \mathcal{C} \to \mathcal{P}(\mathbb{N} \times (\mathcal{V} \to \Sigma^*))
$
where $\mathcal{P}(X)$ is the power set of $X$.
On $\sem{\tuple{R, w, p, \Lambda}} = \set{ \tuple{p_1, \Lambda_1}, \ldots, \tuple{p_n, \Lambda_n} }$,
each pair $\tuple{p_i, \Lambda_i}$ means that, after executing $R$ on $w$ from $p$ under $\Lambda$,
we move to the position $p_i$ and obtain an assignment $\Lambda_i$.
On the basis of this idea, we define a semantics for each rule of the REGEX part:
\[
\sem{\tuple{\sigma, w, p, \Lambda}} = \begin{cases}
\set{ \tuple{p+1, \Lambda} } & \text{if $p < |w|$ and $w[p] = \sigma$}, \\
\emptyset & \text{otherwise,}
\end{cases}
\qquad
\sem{\tuple{\epsilon, w, p, \Lambda}} = \set{ \tuple{p, \Lambda} },
\]
\[
\begin{array}{rcl}
\sem{\tuple{E_1 + E_2, w, p, \Lambda}} & = & \sem{\tuple{E_1, w, p, \Lambda}} \cup \sem{\tuple{E_2, w, p, \Lambda}}, \\[5pt]
\sem{\tuple{E_1 E_2, w, p, \Lambda}} & = & \displaystyle\bigcup_{ \tuple{p',\,\Lambda'} \in \sem{\tuple{E_1, w, p, \Lambda}} } \sem{\tuple{E_2, w, p', \Lambda'}}, \\[5pt]
\sem{\tuple{E^*, w, p, \Lambda}} & = & \displaystyle\bigcup\limits^\infty_{i = 0} \sem{\tuple{E^i, w, p, \Lambda}}
\quad \text{\small where $E^0 = \epsilon$, $E^i = \overbrace{E E \cdots E}^{\smash{i}}$.}
\end{array}
\]
We note that our semantic function $\sem{\tuple{E, w, p, \Lambda}}$ is inductively defined on the lexicographic ordering over the star height of $E$ and the expression size of $E$.
The start height and expression size of \rewbl{} is defined in the usual way.

We also note that each $\sem{\tuple{E, w, p, \lambda}}$ forms a finite set  because the value of each variable $x$ must be a substring of $w$,
and also $p$ is bounded as $0 \leq p \leq |w|$.

\paragraph*{Semantics of REWB}
A capture expression $(E)_x$ attempts to match the input string with $E$.
If it succeeds, the matched substring is stored in the variable $x$.
\[
\sem{\tuple{\,(E)_x, w, p, \Lambda\,}} = \bigl\{ \left\langle\ p', \Lambda'\bigl[\,x \mapsto w[p..p')\,\bigr]\ \right\rangle : \tuple{p', \Lambda'} \in \sem{\tuple{E, w, p, \Lambda}} \bigr\}
\]
where $w[p..q)$ is the string $w[p]\,w[p+1] \cdots w[q-1]$.

A backreference $\var{x}$ refers to the substring stored previously by evaluating some $(E)_x$.
\[
\sem{\tuple{\var{x}, w, p, \Lambda}} = \sem{\tuple{\Lambda(x), w, p, \Lambda}}.
\]

\paragraph*{Semantics of Lookaheads}

Positive lookaheads $?(E)$ run $E$ from the current input without consuming any input. 
Although the change in head position is undone after running $E$, the modification to variables in $E$ is not.
So, we can also call it destructive lookahead.
\[
\sem{\tuple{?(E), w, p, \Lambda}} = \bigl\{ \tuple{p, \Lambda'} : \tuple{p', \Lambda'} \in \sem{\tuple{E, w, p, \Lambda}} \bigr\}.
\]

Negative lookaheads $!(E)$ also run $E$ without consuming any input. If $E$ does not match anything, we invoke a continuation.
Compared with positive lookaheads $?(E)$, both the previous head position and the previous values of variables are recovered.
\[
\sem{\tuple{!(E), w, p, \Lambda}}  =
\begin{cases}
\set{ \tuple{p, \Lambda} } & \text{if}\ \sem{\tuple{E, w, p, \Lambda}} = \emptyset, \\
\emptyset &  \text{otherwise.}
\end{cases}
\]

Readers may wonder why positive lookaheads modify variables while negative ones do not.
The reason for this asymmetry is that: in negative lookaheads, all computations fail uniformly and so there is no suitable configuration for altering variables.
On the other hand, using the non-destructive property of the negative lookaheads, we can also define non-destructive positive lookaheads as $!(!(E))$.
This expression executes $E$ from the current position without any variable modifications.

\subparagraph*{Remark: Special character $\$$}
Using a negative lookahead, we define $\$ = !(\Sigma)$ to check the end of the input.
$\$$ is one of the most important applications of negative lookaheads.

\begin{definition}
The language of a \rewbl{} $E$, $L(E)$, is defined as follows:
\[
L(E) = \bigl\{ w : \tuple{p, \Lambda} \in \sem{\tuple{E, w, 0, \iota}},\,p = |w| \bigr\}, 
\qquad \text{where\ $\forall x \in \mathcal{V}.\,\iota(x) = \epsilon$}.
\]
\end{definition}
We can also consider another definition using $\gamma(x) = \bot$ instead of $\iota$,
where $\gamma$ indicates that all variables are initially undefined.
Although some real-world regular expression engines adopt that definition,
we adopt the above $\iota$-definition since it is tedious to initialize all variables $x$ using $(\epsilon)_x$.
The results discussed in this paper will not change regardless of which one is used.
There is another formalization that excludes labels that appear multiple times in a single group, for instance $(\var{x} \var{x})_x$.
However, since this paper is interested in language classes in the most expressive setting possible, no restrictions are placed on such labels.

\newcommand{\myedge}[1]{\ulcorner #1 \urcorner}

\paragraph*{\nlog-complete Language Accepted by \rewb}
Hartmanis and Mahaney proposed a decision problem called \emph{TAGAP}, which is the topological sorted version of the reachability problem of directed \emph{acyclic} graphs (DAG)~\cite{Hartmanis:1981}.
Let us consider a word $w = \myedge{x_1 \to y_1} \myedge{x_2 \to y_2}  \cdots \myedge{x_n \to y_n}$, which represents a DAG.
We call $w$ \emph{topologically sorted} if: for all pairs of $a \to b$ and $b \to c$ of $w$, $a \to b$ appears before $b \to c$ in $w$.
The following example represents a DAG and one of its topologically sorted representation:
\[
\tikz[baseline=-10pt] {
\node (x) {$s$};
\node (a) [right = .5cm of x] {$a$};
\draw[-Latex] (x) edge (a);
\node (b) [right = .5cm of a] {$b$};
\draw[-Latex] (a) edge (b);
\node (c) [right = .5cm of b] {$c$};
\draw[-Latex] (b) edge (c);
\node (t) [right = .5cm of c] {$t$};
\node (d) [below = .4cm of b] {$d$};
\draw[-Latex] (a) edge (d);
\node (e) [right = .5cm of d] {$e$};
\draw[-Latex] (d) edge (e);
\draw[-Latex] (a) edge (e);
\draw[-Latex] (d) edge (t);
},\quad
\myedge{s \to a}\,\myedge{a \to d}\,\myedge{a \to e}\,\myedge{d \to t}\,\myedge{a \to b}\,\myedge{d\to e}\,\myedge{b \to c}.
\]

\vspace{-10pt}We define the language for TAGAP as follows:
\[
L_{\text{TAGAP}} = \set{ s\,\#\,\textit{R}\,\#\,t : \text{$R$ is a topologically sorted repr.~of $G$},\ \text{$t$ is reachable from $s$ in $G$} }.
\]

Hartmanis and Mahaney showed that this language is \nlog-complete~\cite[Theorem~3]{Hartmanis:1981}.
Since we only consider the topologically sorted representation, there is no longer a need to explore the entire input many times.
In the above example, we can reach $t$ from $s$ by nondeterministically finding edges $s \to a$, $a \to d$, and $d \to t$ in this order by one-way scanning.
Indeed, we can show the following theorem.
\begin{theorem}
The language class \rewblang{} contains the \nlog-complete language $L_{\text{TAGAP}}$.
\end{theorem}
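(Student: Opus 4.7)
The plan is to construct a REWB expression $E_{\text{TAGAP}}$ and show that $L(E_{\text{TAGAP}}) = L_{\text{TAGAP}}$. Since $L_{\text{TAGAP}}$ is already $\nlog$-complete by Hartmanis--Mahaney, the theorem then follows immediately. The key observation I will exploit is that, because the graph is presented in topologically sorted form, the edges of any path $s = v_0 \to v_1 \to \cdots \to v_k = t$ in $G$ appear in $R$ in left-to-right order, so a single one-way scan of the input should suffice to witness reachability, and no lookaheads should be needed.

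The expression I would take is
\[
E_{\text{TAGAP}}\ =\ (V^*)_{\textsc{Cur}}\,\#\,\Bigl(\bigl(\myedge{V^* \to V^*}\bigr)^{*}\,\myedge{\var{\textsc{Cur}} \to (V^*)_{\textsc{Cur}}}\Bigr)^{*}\bigl(\myedge{V^* \to V^*}\bigr)^{*}\,\#\,\var{\textsc{Cur}}
\]
(writing $V^*$ as shorthand for the regular expression $(v_1 + \cdots + v_m)^*$ when $V = \{v_1, \ldots, v_m\}$). The intended reading is: first capture the source $s$ into $\textsc{Cur}$ and match the first $\#$; each iteration of the outer loop then skips some prefix of unused edges and consumes one edge $\myedge{\var{\textsc{Cur}} \to (V^*)_{\textsc{Cur}}}$ starting at the current vertex, simultaneously updating $\textsc{Cur}$ to its endpoint; afterwards the trailing block skips any remaining edges and the final $\#\,\var{\textsc{Cur}}$, together with the global condition $p = |w|$, forces the terminal value of $\textsc{Cur}$ to equal the tail $t$.

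To establish soundness, I would argue that any accepting run yields successive captures $s = u_0, u_1, \ldots, u_k = t$ such that each $\ulcorner u_i \to u_{i+1} \urcorner$ is consumed from the input, so the encoded graph contains the path $s \to u_1 \to \cdots \to t$. To establish completeness, given $s\,\#\,R\,\#\,t \in L_{\text{TAGAP}}$ with a path $s = v_0 \to v_1 \to \cdots \to v_k = t$ in the DAG encoded by $R$, the topological sort hypothesis guarantees that the occurrence of $\ulcorner v_i \to v_{i+1} \urcorner$ in $R$ precedes that of $\ulcorner v_{i+1} \to v_{i+2} \urcorner$; I would then exhibit an accepting run that nondeterministically consumes these $k$ edges in order, using the inner and trailing $\bigl(\myedge{V^* \to V^*}\bigr)^{*}$ blocks to skip over the unused ones.

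The main technical point I expect to handle is justifying that the nondeterministic choices inside the $V^*$-subexpressions align with the vertex boundaries of the input. Because the delimiters $\#$, $\ulcorner$, $\urcorner$, and $\to$ do not lie in $V$, any $V^*$-block that stops too early or extends across a delimiter cannot be continued into a match of the surrounding context, so in every accepting derivation each $(V^*)_{\textsc{Cur}}$ captures exactly one vertex name and each $\myedge{V^* \to V^*}$ matches exactly one bracketed edge. Beyond this bookkeeping, nothing deeper is needed: the true conceptual content is that topological sorting replaces the ``revisit the edge list'' step of $E_{\text{reach}}$ from Section~\ref{sec:intro:challenge} with a single left-to-right pass, which is precisely why no lookaheads are required.
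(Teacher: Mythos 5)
Your proposal is correct and follows essentially the same route as the paper, which simply exhibits the expression $E_{\text{TAGAP}} = (V^*)_{\textsc{Cur}}\ \#\ \bigl(\Sigma^*\ \ulcorner\ \var{\textsc{Cur}}\to(V^*)_{\textsc{Cur}}\ \urcorner\bigr)^*\ \Sigma^*\,\#\,\var{\textsc{Cur}}$ and relies on the same one-way-scan idea enabled by topological sorting. Your only deviation is cosmetic: you use edge-shaped skip blocks $\bigl(\myedge{V^*\to V^*}\bigr)^*$ where the paper uses the more permissive $\Sigma^*$, and you spell out the boundary-alignment argument that the paper leaves implicit under the word ``clearly.''
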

\begin{proof}
The following expression $E_{\text{TAGAP}}$ clearly recognizes $L_{\text{TAGAP}}$:
\[
E_{\text{TAGAP}} = (V^*)_{\textsc{Cur}}\ \#\ \bigl(\Sigma^*\ \ulcorner\ \var{\textsc{Cur}}\,\to\,(V^*)_{\textsc{Cur}}\ \urcorner\ \bigr)^*\ \Sigma^*\,\#\ \var{\textsc{Cur}},
\]
where $V$ is an alphabet for vertices.
\end{proof}

\section{Expressiveness of \rewbl}\label{sec:expressiveness}

In this section, we present some theorems about the expressiveness of \rewbl.

\subsection{Unary Non-Indexed Language}\label{subsec:unary-non-indexed-languages}
We consider the single exponential numerical language $L_{\mathit{1exp}} = \set{ a^{2^k} : k \in \mathbb{N} }$ over the unary alphabet $\Sigma = \set{ a }$.
The language $L_{\mathit{1exp}}$ is represented by the \prewb{} expression $E_{\mathit{1exp}} = ?(a)_x\,(?(\var{x} \var{x})_x)^*\,\var{x}$.
The part $?(a)_x$ initializes $x = a$ (i.e., $x = a^0$), and the Kleene-$*$ part iteratively doubles $x$.

Furthermore, we can represent the doubly exponential language $L_{\mathit{2exp}} = \set{ a^{2^{2^k}} : k \in \mathbb{N} }$ by the following \rewbl{}-expression:
\[
E_{\mathit{2exp}} =
?((a)_m)\ \bigl(?[(\var{n}a)_n] ?[(\var{m}\var{m})_m]\bigr)^*\ 
?((a)_x)\ \bigl(?[(\var{y}a)_y] ?[(\var{x}\var{x})_x]\bigr)^*\ 
?(a^*\,?(\var{m}\$)\,?(\var{y}\$))\ 
\var{x}.
\]
It searches the numbers $n, m, x, y$ that satisfy $2^n = m$, $2^y = x$, and $m = y$; then, $x = 2^y = 2^{2^n}$.
While unfolding the first two Kleene-*, $m = 2^n$ and $y = 2^x$ hold.
The part $?(a^*\,?(\var{m}\$)\,?(\var{y}\$))$ checks if $m =_? y$ by effectively using the negative lookahead expression $\$ = !(a)$.

We emphasize the known result $L_{\mathit{2exp}} \notin \indexed$, which is shown by the pumping or shrinking lemma for indexed languages~\cite{Hayashi:1973,Gilman:1996}.
Since we can carry out a similar construction of $E_{\mathit{2exp}}$ in \prewb{} and \nrewb, we have the following result.
\begin{theorem}\label{thm:unary-non-indexed}
\prewb{} and \nrewb{} can represent unary non-indexed languages.
\end{theorem}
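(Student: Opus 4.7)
The plan is to construct, separately for $\prewb$ and $\nrewb$, an explicit expression that matches a unary language whose $k$-th word length is at least doubly exponential in $k$. Any such language lies outside $\indexed$: by the shrinking/pumping lemmas for indexed languages (Hayashi; Gilman) cited in the excerpt, a unary indexed language admits at most single-exponential gaps between consecutive word lengths, and a doubly-exponentially spaced unary language violates this bound. In both cases I would start from the expression $E_{\mathit{2exp}}$ already exhibited in the paper and modify it to fit the restricted syntax of the sublanguage at hand.

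For $\nrewb$, destructive positive lookaheads are not available, but the end-of-input marker $\$=!(\Sigma)$ is, and a non-destructive positive lookahead is obtained as $!(!(\cdot))$. I would replace every $?((\cdots)_v)$ of $E_{\mathit{2exp}}$ by a consuming capture $((\cdots)_v)$, and every end-anchored check $?(E\,\$)$ by $!(!(E\,\$))$. A concrete witness is
\[
(a)_x\,(\var{x}\var{x})_x^{*}\,(a)_y\,((\var{y}\var{y})_y\,(\var{c}a)_c)^{*}\,!(!(\var{c}\,\$))\,!(!(\var{x}\,\$))\,\var{x}.
\]
The first Kleene star makes $|x|=2^k$ via consuming doublings of $x$; the second makes $|c|=j$ and $|y|=2^j$ via a counter $c$ and consuming doublings of $y$. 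The pair $!(!(\var{c}\,\$))\,!(!(\var{x}\,\$))$ forces $|c|=|x|$, hence $j=2^k$, so the matched word length $N(k)$ is dominated by $2^{2^k+1}$, which is doubly exponential and therefore non-indexed.

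For $\prewb$, the destructive positive lookaheads that do the bulk of the work in $E_{\mathit{2exp}}$ are fully available; only the two $\$$-occurrences inside the length-equality check need to be removed. Since $\prewb$ has no zero-width end-of-input test, I would replace the fragment $?(a^*\,?(\var{m}\,\$)\,?(\var{y}\,\$))$ by destructive positive lookaheads that capture an auxiliary variable $r$ via $?((\var{m}\,(a^*)_r)_t)$ and cross-check it against $\var{y}$ by further lookaheads such as $?(\var{y}\,\var{r})$ (plus a symmetric counterpart swapping the roles of $m$ and $y$), and rely on the final $\var{x}$ at the top level to fix the total input length, thereby making the otherwise nondeterministic capture of $r$ effectively deterministic. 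This indirect pinning of $r$ substitutes for the missing end-marker and enforces $|m|=|y|$, reproducing the effect of the $\$$-check of $E_{\mathit{2exp}}$ and yielding a doubly-exponentially spaced unary language.

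The main obstacle is the $\prewb$ case, precisely because $\prewb$ lacks any zero-width end-of-input marker: the equality $|m|=|y|$ must arise from a delicate interplay of nondeterministic $(a^*)_r$ captures, destructive positive lookaheads that reuse them through backreferences, and the top-level consumption requirement. Verifying that this combination enforces exactly the intended equality, with neither spurious matched lengths nor lost legitimate words, is the central technical step. The $\nrewb$ case is comparatively routine once one recognises that $!(!(\cdot))$ supplies a non-destructive positive lookahead and that destructive positive lookaheads can be emulated by consuming captures at the price of a slightly more intricate length arithmetic.
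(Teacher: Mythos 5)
Your \nrewb{} half is correct and is essentially the paper's own construction: the paper likewise replaces the destructive positive lookaheads of $E_{\mathit{2exp}}$ by consuming captures, uses the double negations $!(!(\var{m}\,\$))\,!(!(\var{y}\,\$))$ to force the two length counters to agree, and concludes non-indexedness from the fact that every unary indexed language has growth order $O(2^n)$ (formalized in the appendix via the growth function $\mathcal{G}_L$); your variant with the unary counter $c$ is an inessential rearrangement of the same idea.

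The \prewb{} half, however, has a genuine gap, located exactly where you flag ``the central technical step.'' All the constraints your replacement for $?(a^*\,?(\var{m}\,\$)\,?(\var{y}\,\$))$ produces are one-sided prefix constraints: writing $R$ for the remaining input length, $?((\var{m}\,(a^*)_r)_t)$ only forces $|m|+|r|\le R$, $?(\var{y}\,\var{r})$ only forces $|y|+|r|\le R$, and the symmetric counterpart adds two more inequalities of the same shape. Nothing pins $r$ to reach the end of the input, so taking $r=\epsilon$ satisfies every lookahead whenever $|m|,|y|\le R$, and the expression accepts with $|m|\neq|y|$. The accepted lengths then degenerate to (roughly) $\set{2^y : y\in\mathbb{N}}$, whose growth is single-exponential, and the non-indexedness argument collapses. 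The paper closes this gap with a \emph{halving} trick: alongside $m$ and $x$ it maintains variables $m_{1/2}$ and $x_{1/2}$ of half their lengths, checks $?(\var{m})\,?(\var{x})$ (yielding the upper bounds $|m|\le R$ and $|x|\le R$) and then actually \emph{consumes} $\var{m_{1/2}}\,\var{x_{1/2}}$; total consumption forces $(|m|+|x|)/2=R$, which together with the two upper bounds pins $|m|=|x|=R$. Some device of this kind---converting the equality test into a tight consumption constraint rather than a conjunction of prefix tests---is what your sketch is missing.
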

\begin{proof}[Proof (Sketch)]
Due to the page limitation, we provide a proof sketch for the \nrewb{} part and put the complete proof in the Appendix.
Let us consider the following \nrewb{} expression:
\[
E'_{\mathit{2exp}} = (a)_m\ E_1\ (a)_x\ E_2\ !(!(\var{m} \$))\ !(!(\var{y} \$))\ a^*
\]
where
$E_1 = ((\var{n} a)_n (\var{m} \var{m})_m)^*$ and
$E_2 = ((\var{y} a)_y (\var{x} \var{x})_x)^*$.
While the expression $E'_{\mathit{2exp}}$ resembles $E_{\mathit{2exp}}$, it lacks positive lookaheads.
Let us explain $E'_{\mathit{2exp}}$  step-by-step:
\begin{enumerate}
\item The expressions $(a)_m$ and $(a)_x$ initialize $m$ and $x$ by $a$ as with $E_{\mathit{2exp}}$.
\item The subexpression $E_1$ repeatedly expands variables $n$ and $m$ as with $E_{\mathit{2exp}}$.
So, executing $E_1$ \emph{actually} consumes inputs as follows without positive lookaheads:
\[
\underline{a^1}_n\ \underline{a^{2^1}}_m\ \underline{a^2}_n\ \underline{a^{2^2}}_m\ \underline{a^3}_n\ \underline{a^{2^3}}_m \cdots \underline{a^i}_n\ \underline{a^{2^i}}_m \cdots
\]
\item The same holds for the expression $E_2$.
\end{enumerate}

The part $!(!(\var{m}\$))$ is a non-destructive positive lookahead by $\var{m}\$$ that is simulated by double negative lookaheads;
therefore, the part $!(!(\var{m}\$))\ !(!(\var{y}\$))$ requires $m = y$. And finally, if we pass the assertion, we consume the rest input by $a^*$.
By replacing positive lookaheads with actual consuming, the language $L(E'_{\mathit{2exp}})$ \emph{grows faster} (the notion \emph{fast growth} will be formally introduced in the Appendix) than $L_{\mathit{2exp}}$. This property implies that $L(E'_{\mathit{2exp}})$ is not an indexed language.
\end{proof}

It states that, even if restricted to unary languages, positive or negative lookaheads make \rewb{} expressive and incomparable to \indexed{}.
We can also show the undecidability of the emptiness problem, which is checking if $L(E) = \emptyset$ for a given expression $E$, of \prewb{} and \nrewb{} over $\Sigma = \set{ a }$.
\begin{theorem}\label{thm:emptiness-of-unary}
The emptiness problems of \prewb{} over a unary alphabet and \nrewb{} over a unary alphabet are undecidable.
\end{theorem}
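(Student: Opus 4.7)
The plan is to reduce from a classical undecidable problem about natural-number computations, for instance the halting problem for deterministic two-counter (Minsky) machines or Hilbert's 10th problem. Given an instance $I$, I would build a unary \prewb{} (respectively \nrewb) expression $E_I$ such that $L(E_I) \neq \emptyset$ iff $I$ has a positive answer. Since the source problem is undecidable, emptiness of the constructed expression would then be undecidable, and both halves of the statement follow from the two parallel constructions.

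To build $E_I$ I would reuse the arithmetic primitives already implicit in Theorem~\ref{thm:unary-non-indexed}: a captured variable whose value has length $n$ represents the natural number $n$; concatenation of backreferences yields addition; the doubling pattern $(\var{x}\var{x})_x$ yields multiplication by two; a pattern of the form $(\var{y}a)_y$ yields incrementation; and the lookahead-based assertions $?(\var{x}\$)$ for \prewb, respectively $!(!(\var{x}\$))$ for \nrewb, give a non-destructive equality test between the current value of a variable and the remaining input suffix. With these primitives one can express, inside the body of a single Kleene-$*$, one transition of a Minsky machine (or one evaluation step in an arithmetic recursion). Prefixing the whole expression with initialisations for the counter and control-state variables, and postfixing it with a final assertion stating that the halting state has been reached (or that the polynomial evaluates to its target value) together with an absorbing $a^*$ to swallow the rest of the unary input, yields the desired $E_I$. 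A routine induction on the number of simulated steps then shows $L(E_I) \neq \emptyset$ iff $I$ has a positive answer.

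The main obstacle will be realising the \emph{zero-test} of a Minsky machine---equivalently, an exact equality check on the length of a captured variable---in a way that is \emph{non-destructive on all other counter variables}, since later iterations of the outer Kleene-$*$ must still observe those variables at their current values. For \prewb{} this is relatively direct: positive lookaheads do not consume input, and the limited amount of variable clobbering they induce can be sidestepped by always introducing fresh auxiliary names inside each test. For \nrewb{} we must instead lean on the double-negative idiom $!(!(E))$ from Theorem~\ref{thm:unary-non-indexed}, which restores both the head position and the entire variable environment after the check; carefully scoping the counters through this idiom while simultaneously allowing the outer loop to update them is the technically delicate step. Once a non-destructive length-equality test has been implemented in each calculus, the rest of the reduction is routine bookkeeping, and the undecidability of emptiness over a unary alphabet follows for both \prewb{} and \nrewb.
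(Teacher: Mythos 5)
There is a genuine gap in the \prewb{} half of your reduction. You propose to realise the equality/zero test in \prewb{} via assertions of the form $?(\var{x}\$)$, but $\$$ is itself defined as the negative lookahead $!(\Sigma)$, so this is not a \prewb{} expression. Without negative lookaheads there is no end-of-input anchor at all, and a positive lookahead $?(\var{x})$ only asserts that the value of $x$ is a \emph{prefix} of the remaining input, never that it equals it. This is precisely the obstacle the paper singles out: it is resolved not by renaming auxiliary variables (the non-destructiveness issue you flag is the easy part) but by a \emph{halving} trick, in which one additionally maintains variables $\alpha_{1/2},\beta_{1/2}$ holding half the lengths of $\alpha,\beta$ and realises the single test $\alpha=_?\beta$ through the global acceptance condition: the suffix $\var{\alpha_{1/2}}\,\var{\beta_{1/2}}$ consumes exactly the rest of the input iff $\alpha=\beta$. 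Because this mechanism yields essentially \emph{one} deferred equality check per expression, the paper reduces from PCP (which needs only the final check $\alpha=\beta$) rather than from Minsky machines or Hilbert's tenth problem; your choice of a two-counter machine, which needs a zero test at every step of the outer Kleene-$*$, cannot be deferred this way and would not go through in \prewb{} with the primitives you list.

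The \nrewb{} half of your proposal is essentially the paper's argument: there the double-negation idiom $!(!(\,\cdots\$\,))$ is legal and gives a non-destructive exact-length test, so several equality constraints can all be checked at the end. The paper instantiates this with Hilbert's tenth problem, nondeterministically generating $x$, $y$, $x^2$, $x^3$, $xy$ as unary blocks and verifying the polynomial identity and the consistency constraints with a sequence of $!!(\cdots\$)$ assertions, which matches your sketch. To repair the \prewb{} case you would either need to adopt the halving technique with a source problem requiring only a single terminal equality test, or supply some other mechanism for exact (rather than prefix) comparison that uses positive lookaheads only.
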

The two undecidability results can be shown by encoding the Post Correspondence Problem to the emptiness problems.
Due to the page limitation, we put the proof of Theorem~\ref{thm:unary-non-indexed} in the Appendix.

\subsection{Simulating Two-way Multihead Automata by \rewbl}

We show that \rewbl{} can simulate  \emph{two-way multihead automata}, which are a classical extension of automata and capture \nlog~\cite{Hartmanis:1972}.

\paragraph*{Simulating Two-way One-head Automata}
We start from two-way \emph{one}-head automata.
Let $\mathcal{A} = (Q, q_\text{init}, q_\text{acc}, \Sigma, \Delta)$ be a two-way automata where
$Q = \set{ q_0, q_1, \ldots, q_{|Q|-1} }$ is a finite set of states, $q_\text{init} \in Q$ is the initial state, $q_\text{acc} \in Q$ is the accepting state, $\Sigma$ is an input alphabet, and $\Delta$ is a set of transition rules.
Each transition rule is of the form $p \xrightarrow{\tau / \theta} q$ where $p, q \in Q$, $\tau \in (\Sigma \cup \set{\vdash, \dashv})$, and  $\theta \in \set{-1, 0, 1}$.
The component $\theta$ indicates a head moving direction:
(1) if $\theta = -1$ (resp. $\theta = 1$), we move the scanning head left (resp. right);
(2) if $\theta = 0$, we \emph{do not} move the scanning head.

For an input $w \in \Sigma^*$, we run $\mathcal{A}$ on the extended string $\vdash w \dashv$, which are surrounded by the left and right end markers.
A configuration of $\mathcal{A}$ for $\vdash w \dashv$ is a tuple $(q, i)$ where $q \in Q$ and $0 \leq i < 2 + |w|$,
and thus the current scanning symbol is $(\vdash w \dashv)[i]$.

A transition rule $\delta = p \xrightarrow{\tau / \theta} q \in \Delta$ gives a labelled transition relation $\overset{\smash{\delta}}{\Rightarrow}$ as follows:
\[
(p, i) \overset{\smash{\delta}}{\Rightarrow} (q, i + \theta) \quad \text{if $(\vdash w \dashv)[i] = \tau$ and $0 \leq i + \theta < |w|+2$}
\]

The word $w \in \Sigma^*$ is accepted by $\mathcal{A}$ if the initial configuration $(q_\text{init}, 0)$, reading $\vdash$, has a computation path to a configuration with the accepting state $q_\text{acc}$. We now define the language of $\mathcal{A}$ as follows:
\[
L(\mathcal{A}) = \set{ w : (q_\text{init}, 0) \overset{\smash{\delta_1}}{\Rightarrow} (q_1, i_1)  \overset{\smash{\delta_2}}{\Rightarrow} \cdots  \overset{\smash{\delta_n}}{\Rightarrow} (q_\text{acc}, i_n) }.
\]

To simulate $\mathcal{A}$ by \rewbl{}, we use some variables $L, R, S \in \Sigma^*$, and $C \in \Sigma \cup \set{ \epsilon }$.
Intuitively, each variable means the following:
\begin{itemize}
\item If $C \in \Sigma$, it means that $\mathcal{A}$ is scanning $C$. If $C = \epsilon$, it means that $\mathcal{A}$ is located on $\vdash$ or $\dashv$.
\item $L$ (resp. $R$) denotes the substring left (resp. right) to $C$; so, $w = L C R$ is an invariant.
\item The length of $S$ denotes the index $i$ of the current state $q_i$ of $\mathcal{A}$.
\end{itemize}
We formalize the above intuition as the following simulation $\sim$ between $(q, i)$ and $\tuple{L, C, R, S}$:
\[
\begin{array}{ll}
(q_j, 0) \sim \tuple{\epsilon, \epsilon, w, S} & \text{if $|S| = j$}, \\
(q_j, |w|+1) \sim \tuple{w, \epsilon, \epsilon, S} & \text{if $|S| = j$}, \\
(q_j, i) \sim \tuple{L, C, R, S} & \text{if $w = L\,C\,R$, $|L| = i-1$, $C = (\vdash w \dashv)[i]$, and $|S| = j$}.
\end{array}
\]
To represent all states, we need $| w | \geq |Q|-1$.
So, we mainly consider to represent $L(\mathcal{A}) \setminus L'$ where $L' = \set{ w \in L(\mathcal{A}) : | w | < |Q|-1 }$.
For instance, our simulation proceeds as follows:
\[
\begin{array}{ll@{\ \,}c@{\ \,}l@{\ }l}
(1) & \vdash^{q_0} a b \dashv & \sim & \tuple{L = \epsilon,\ C = \epsilon, R = a b,\ S = \epsilon}, & \text{\small (if $R = w$, then $L = C = \epsilon$ and $\mathcal{A}$ is on $\vdash$)}, \\
(2) & \vdash a^{q_1} b \dashv & \sim & \tuple{L = \epsilon,\ C = a,\ R = b,\ S = a}, & \text{\small (move right from (1))} \\
(3) & \vdash a b^{q_1} \dashv & \sim & \tuple{L = a,\ C = b,\ R = \epsilon,\ S = a}, \\
(4) & \vdash a b \dashv^{q_1} & \sim & \tuple{L = a b,\ C = \epsilon,\ R = \epsilon,\ S = a} & \text{\small (if $L = w$, then $R = C = \epsilon$ and $\mathcal{A}$ is on $\dashv$)}, \\
(5) & \vdash a b^{q_2} \dashv & \sim & \tuple{L = a,\ C = b,\ R = \epsilon,\ S = ab}, & \text{\small (move left from (5))} \\
(6) & \vdash a^{q_2} b \dashv & \sim & \tuple{L = \epsilon,\ C = a,\ R = b,\ S = ab}.
\end{array}
\]

As initialization, we use the expression $E_\text{init} =\ (\epsilon)_{L} (\epsilon)_{C}\ ?((\Sigma^*)_{R}\,\$)\ (\epsilon)_S$,
which sets $L = C = S = \epsilon$ and $R = w$ for the input $w$.

To move the head right, we use the following expression
\[
E_{+1} =\ !(\var{L}\$)\ ?\bigl[(\var{L} \var{C})_L (\Sigma)_C (\Sigma^*)_R \$\ +\ (\var{L} \var{C})_L (\epsilon)_C (\epsilon)_R \$\bigr]
\]
where the part $!(\var{L}\$)$ checks if the right end marker $\dashv$ is being scanned.
Similarly, we define the expression $E_{-1}$ to move the head left as follows:
\[
E_{-1} =\ !(\var{R}\$)\ ?\big[ (\Sigma^*)_L (\Sigma)_C (\var{C} \var{R})_R \$ + (\epsilon)_L (\epsilon)_C (\var{C} \var{R})_R \$ \bigr].
\]

To check the current scanning symbol is $\sigma \in \Sigma$, $\vdash$, or $\dashv$, we use the following expression:
\[
E_{\sigma} = !(\var{L} \$) !(\var{R} \$)\,?(\var{L}\, \sigma\, \var{R} \$),\ 
E_{\dashv} = ?(\var{L} \$),\  
E_{\vdash} = ?(\var{R} \$).
\]

To check if the current state is $q_i$, we use the expression $E_{q_i} = ?\bigl(\Sigma^*\ ?(\var{S}) \$\ ?(\Sigma^i) \$\bigr)$.
To change the current state $q_i$ to $q_j$, we use the expression
$E_\text{$i$-to-$j$} = E_{q_i}\, ?(\Sigma^*\,(\Sigma^j)_{S}\,\$)$.

Now, each transition rule $\delta$ is simulated by the following expression $E(\delta)$ defined as:
\[
E(q_i \xrightarrow{\smash{\tau / 0}} q_j) = E_\text{$i$-to-$j$}\ E_{\tau},\quad
E(q_i \xrightarrow{\smash{\tau / \theta}} q_j) = E_\text{$i$-to-$j$}\ E_{\tau}\ E_\theta~~(\theta \in \set{-1, +1}).
\]

Finally, the following expression $E_\mathcal{A}$ simulates $\mathcal{A}$, and $L(E_\mathcal{A}) = L(\mathcal{A})$ holds:
\[
E_\mathcal{A} = E_{L'}\ +\ ?(E_{q_\text{init}}\,(E(\delta_1) + E(\delta_2) + \cdots + E(\delta_n))^*\,E_{q_\text{acc}})\,\Sigma^*
\]
where $E_{L'}$ is a regular expression for the finite language $L'$, and $\Delta = \set{\delta_1, \delta_2, \ldots, \delta_n}$.

\paragraph*{Simulating Two-way Multihead Automata}

We extend the above argument to two-way \emph{multihead} automata $\mathcal{M}$~\cite{Hartmanis:1972,Holzer:2011}.
Compared to two-way one-head automata $\mathcal{A}$, $\mathcal{M}$ has multiple-heads on input strings.
We write $K$ for the number of heads. The difference between $\mathcal{A}$ and $\mathcal{M}$ are the following:
\begin{itemize}
\item Each configuration of $\mathcal{M}$ is a tuple $(q, i_1, i_2, \ldots, i_K)$ where $q$ is the current state and $i_j$ is the $j$-th head position.
\item Each transition rule is $p \xrightarrow{ (\tau_1, \ldots, \tau_K) / (\theta_1, \ldots, \theta_K) } q$ where $p, q \in Q$, 
$\tau_j \in \Sigma \cup \set{ \vdash, \dashv }$ is used for inspecting the scanned symbol by $j$-th head,
and $\theta_j$ denotes the head moving direction for the $j$-th head.
\end{itemize}
We define a transition relation $\Rightarrow$ in the same way as for $\mathcal{A}$.
Let $\delta = p \xrightarrow{ (\tau_1, \ldots, \tau_K) / (\theta_1, \ldots, \theta_K) } q$ be a rule
and $C = (p, i_1, i_2, \ldots, i_K)$ be a valid configuration.
If $\forall 1 \leq j \leq K.\,(\vdash w \dashv)[i_j] = \tau_j$, then we have $C \overset{\delta}{\Rightarrow} (q, i_1 + \theta_1, i_2 + \theta_2, \ldots, i_K + \theta_K)$.
We also define the language in the same way:
\[
L(\mathcal{M}) = \set{ w  : (q_\text{init}, 0, 0, \ldots, 0) \overset{\delta_1}{\Rightarrow}  \overset{\delta_2}{\Rightarrow} \cdots  \overset{\delta_n}{\Rightarrow} (q_\text{acc}, \ldots) }.
\]

Since we can easily extend our above construction for two-way multihead automata,
we give an expression $E_{\mathcal{M}}$ for a given multihead automata $\mathcal{M}$ such that $L(E_\mathcal{M}) = L(\mathcal{M})$.

It is well-known that the class of languages accepted by two-way multihead automata corresponds to \nlog~\cite{Hartmanis:1972}; so, we have the following theorem.
\begin{theorem}\label{thm:simulate-2mfa}
$\nlog \subseteq \rewbllang$.
\end{theorem}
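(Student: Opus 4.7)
The plan is to combine two ingredients. First, I invoke the classical characterization of \nlog{} as the class of languages accepted by nondeterministic two-way multihead automata~\cite{Hartmanis:1972}. Second, I apply the translation $\mathcal{M} \mapsto E_\mathcal{M}$ indicated in the preceding paragraphs. Given both, proving $\nlog \subseteq \rewbllang$ reduces to verifying the correctness statement $L(E_\mathcal{M}) = L(\mathcal{M})$ for every multihead automaton $\mathcal{M}$.

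To make the translation precise, I would introduce, for each head $j \in \set{1, \ldots, K}$, a triple of variables $L_j, C_j, R_j$ playing exactly the role of $L, C, R$ in the one-head construction (with invariant $w = L_j C_j R_j$, and $C_j = \epsilon$ encoding that head $j$ sits on $\vdash$ or $\dashv$), together with a single state variable $S$ whose length encodes the current state index. Each one-head primitive is then relativized per head as $E_{+1}^{(j)}, E_{-1}^{(j)}, E_\tau^{(j)}$, and a multihead rule $\delta = p \xrightarrow{(\tau_1,\ldots,\tau_K)/(\theta_1,\ldots,\theta_K)} q$ is simulated by the sequential composition $E_\text{$i$-to-$j$}\ E_{\tau_1}^{(1)} \cdots E_{\tau_K}^{(K)}\ E_{\theta_1}^{(1)} \cdots E_{\theta_K}^{(K)}$. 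The overall $E_\mathcal{M}$ retains the one-head shape, with the Kleene-$*$ of rule gadgets wrapped in an outer positive lookahead (so that no input is consumed during simulation) followed by a trailing $\Sigma^*$, and disjoined with a regular expression covering the finitely many inputs of length $< |Q|-1$.

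Correctness is then an induction on the length of a computation of $\mathcal{M}$, maintaining the invariant that a reachable configuration $(q_i, p_1, \ldots, p_K)$ corresponds to a reachable assignment with $|S| = i$ and, for every $j$, the one-head relation $\sim$ holding between $p_j$ and $\tuple{L_j, C_j, R_j, S}$. Both directions of $L(E_\mathcal{M}) = L(\mathcal{M})$ follow along the lines of the one-head case. The step I expect to require the most care is the noninterference between head gadgets inside a single $E(\delta)$: each $E_{\tau_k}^{(k)}$ and $E_{\theta_k}^{(k)}$ touches only the variables of its own head $k$, so their sequential composition faithfully realises the simultaneous multihead update, but this has to be read off carefully from the capture and lookahead semantics of Section~\ref{sec:rewbl}, with the boundary encodings $C_j = \epsilon$ (heads on $\vdash$ or $\dashv$) handled uniformly across all $K$ heads at once. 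Beyond that, no further obstacle arises.
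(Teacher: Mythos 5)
Your proposal follows essentially the same route as the paper: it reduces $\nlog \subseteq \rewbllang$ to the Hartmanis characterization of \nlog{} by two-way multihead automata and then extends the one-head simulation to $K$ heads by relativizing the variables $L, C, R$ per head while keeping a single state variable $S$, which is precisely the extension the paper asserts is routine. The details you supply (per-head gadgets composed sequentially inside each rule expression, with correctness by induction on the computation length using the invariant $\sim$) are a faithful elaboration of what the paper leaves implicit, and I see no gap.
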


\subsection{True Quantified Boolean Formula}

We translate the \pspace-complete problem \textbf{TQBF}, checking if a \emph{quantified boolean formula} (QBF) is true, into the membership problem of \rewbl.
Here we only consider QBFs in CNF since TQBF restricted to CNF is \pspace-complete~\cite{Arora:2009}.
For instance, let us consider the following QBF $Q$ and translate it to the equivalent form by replacing $\forall$ with $\neg \exists \neg$:
\[
Q = \forall a.\,\exists b.\,\forall c. \forall d. \,(a \lor b \lor c) \land (\overline{b} \lor c \lor d) \Longrightarrow
\neg \exists a.(\neg \exists b.(\neg \exists c.(\exists d.(\neg ((a \lor b \lor c) \land (\overline{b} \lor c \lor d)) )))).
\]
In order to check if $Q$ is true, we first structurally translate $Q$ into the following $E_Q$:
\[
\begin{array}{l}
E(v) = \bigl((T)_v (F)_{\overline{v}}\ + (T)_{\overline{v}} (F)_{v}\bigr) \quad \text{where $v$ is a propositional variable}, \\
E_Q\,=\,!(E(a)\ !(E(b)\ !(E(c)\ E(d)\ !( (\var{a} + \var{b} + \var{c})(\var{\overline{b}} + \var{c} + \var{d}) )))),
\end{array}
\]
and then check $(w =) T F\,T F\,T F\, T F\, T T \in_? L(E_Q)$.
We explain the string $w$ using the annotated version $T_1 F_2 T_3 F_4 T_5 F_6 T_7 F_8 T_9 T_{10}$:
(1) the first two characters $T_1 F_2$ makes the two cases where $(a = T, \overline{a} = F)$ or $(a = F, \overline{a} = T)$;
(2) similarly, $T_3 F_4$ (resp. $T_5 F_6$ and $T_7 F_8$) works for $b$ and $\overline{b}$ (resp. $c, \overline{c}$ and $d, \overline{d}$) ;
(3) by $T_9$, we check if the expression $(a \lor b \lor c)$ holds (in the negative context);
(4) by $T_{10}$, we also check if $(\overline{b} \lor c \lor d)$ holds.
Thus, $Q$ is true iff $w \in L(E_Q)$.
We remark that $Q$ is false because, for $a = F$, $b$ should be $T$ for the first clause; however, the second clause fails for $c = T$ and $d = T$.

On the basis of the above translation using $E(v)$,
we can translate every CNF-QBF $Q$ to the corresponding expression $E_Q$
and give the membership problem $TFTF \ldots TF\ TT\ldots T \in_? L(E_Q)$ in polynomial time for the size of $Q$.
It implies the following result.
\begin{theorem}\label{thm:tqbf-and-rewbl}
The membership problem of \rewbl{} is \pspace-hard.
\end{theorem}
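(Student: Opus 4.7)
The plan is to complete the sketch that precedes the theorem by giving a log-space (hence polynomial-time) many-one reduction from TQBF restricted to CNF formulae, known to be \pspace-complete, to the membership problem of \rewbl{}. Given a CNF-QBF $Q = Q_1 x_1.\,Q_2 x_2.\cdots Q_n x_n.\,\phi$ with matrix $\phi = C_1 \land \cdots \land C_m$, I would produce a pair $(E_Q, w_Q)$ whose total size is linear in $|Q|$ such that $Q$ is true iff $w_Q \in L(E_Q)$.

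First, I would normalise $Q$ by rewriting each $\forall$ as $\neg\exists\neg$ and pushing negations inward, so that every quantifier in the prefix becomes either ``$\neg\exists$'' or ``$\exists$'' and the matrix appears as $\phi$ or $\neg\phi$ depending on parity (as in the worked example). Second, for every propositional variable $v$ I reuse the guessing gadget $E(v) = (T)_v (F)_{\overline{v}} + (T)_{\overline{v}} (F)_v$ from the sketch, which consumes exactly the two-symbol block $TF$ of the input while simultaneously binding the backreference variables $v$ and $\overline{v}$ to complementary truth values via captures. Third, each clause $\ell_1 \lor \ell_2 \lor \ell_3$ becomes the single-symbol matcher $(\var{\ell_1} + \var{\ell_2} + \var{\ell_3})$: since each $\var{\ell}$ expands to the single symbol $T$ or $F$ recorded by the last $E(v)$ that bound $\ell$, the alternation can consume a $T$ from the input precisely when at least one literal holds under the current assignment.

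Assembling these pieces by induction on the normalised prefix, every $\exists v$ translates to a concatenation with $E(v)$ and every $\neg$ to a surrounding negative lookahead $!(\cdot)$; the innermost body is the concatenation of the $m$ clause matchers (optionally wrapped in an additional $!(\cdot)$ according to parity). The input word $w_Q = (TF)^n\,T^m$ is chosen so that, after the $n$ guessing gadgets consume the $(TF)^n$ prefix, the read head stands precisely on the first of the $m$ trailing $T$'s that the clause matchers are about to consume. Correctness then follows by induction on quantifier depth, using the semantic identity that $!(E(v)\,X)$ succeeds from a configuration $\tuple{\cdot, w, p, \Lambda}$ iff every $+$-branch of $E(v)$ makes $X$ fail, i.e.\ iff $\forall v.\,\neg X$.

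The main obstacle I anticipate is the bookkeeping around the non-destructive semantics of negative lookaheads: both the read head and the variable assignment are rolled back after each $!(\cdot)$, so I must verify (a)~that the clause matchers actually observe the assignment produced by the innermost layer of existential guesses, (b)~that the trailing $T^m$ is consumed from the correct starting position, and (c)~that the whole expression reaches position $|w_Q|$ as demanded by the definition of $L(E)$, which may force me to compose the negative-lookahead skeleton with a terminal destructive copy together with a $\Sigma^*$-tail. Once this bookkeeping is discharged, the construction is manifestly log-space computable (local to each quantifier and each clause), so the \pspace-hardness of CNF-TQBF transfers directly to the membership problem of \rewbl{}.
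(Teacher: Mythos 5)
Your construction is exactly the paper's: rewrite $\forall$ as $\neg\exists\neg$, use the gadget $E(v) = (T)_v(F)_{\overline{v}} + (T)_{\overline{v}}(F)_v$ over the input $(TF)^nT^m$, turn each clause into a one-symbol alternation of backreferences, and nest negative lookaheads for the negations. Your bookkeeping worry (c) is legitimate---the paper's $E_Q$ as written returns position $0$ on success, so a trailing $\Sigma^*$ is indeed needed to meet the $p=|w|$ acceptance condition---but this is a one-line patch and the approach is otherwise the same as the paper's.
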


\section{Log-space Nested-Oracles Nondeterministic Turing Machines}\label{sec:nested-oracle-logspace-ntm}
As we have stated in the Introduction, we utilize log-space nested-oracles NTMs.
We will translate \rewbl{} to them in the next section.

We first review log-space NTMs.
Here we especially consider \emph{$c$-bounded $k$-working-tapes} log-space NTM $M = (k, c, Q, q_\text{init}, Q_F, \Sigma, \Gamma, \Box, \Delta)$.
Each component of $M$ means:
\begin{itemize}
\item $k$ is the number of working tapes $T_1, T_2, \ldots, T_k$.
\item $c$ is used to bound the size of working tapes. It will be defined precisely below.
\item $Q$ is a finite set of states, $q_\text{init}$ is the initial state, and $Q_F \subseteq Q$ is a set of accepting states.
\item $\Sigma$ is an input alphabet.
\item $\Gamma$ is a working tape alphabet. $\Box \in \Gamma$ is the blank symbol for working tapes.
\item $\Delta$ is a set of transition rules; Each rule is either 
$p \xrightarrow{\tau\,\mid\,\theta} q$\ \ or\ \ $p \xrightarrow[T_i]{\kappa\ \mapsto\ \kappa'\,\mid\,\theta} q$
\\
where $p, q \in Q$, $\tau \in \Sigma \cup \set{ \vdash, \dashv }$, $\kappa, \kappa' \in \Gamma \cup \set{ \vdash, \dashv }$, and $\theta \in \set{-1, +1, 0}$.
\end{itemize}

Let $w \in (\vdash \Sigma^* \dashv)$ be a string surrounded by the left and right end markers.
Valid configurations of $M$ for $\vdash w \dashv$ are tuples $\tuple{q, i, (T_1, i_1), \ldots, (T_k, i_k)}$ where
\begin{itemize}
\item $q \in Q$ is the current state. $i \in \mathbb{N}$ ($0 \leq i < |w|+2$) is the current head position on $\vdash w \dashv$.
\item $T_x \in (\vdash \Gamma^C \dashv)$ where $C = c \cdot \ilog{|w|}$ is the $x$-th working tape surrounded by the end markers.
$\lceil \cdot \rceil$ is the ceiling function to integers; for example, $\ilog{3} = \lceil 1.584\ldots \rceil = 2$.
\textbf{Remark:} The tape capacity $C$ is determined by the parameter $c$ and the input $w$.
\item $i_x$ is the $x$-th tape head position on $T_x$ ($0 \leq i_x < C + 2$).
\end{itemize}

We write $\textbf{Valid}_M(w)$ (or, simply $\textbf{Valid}(w)$) for the set of valid configurations for the input $w$.
It is clear that $|\textbf{Valid}(w)| = |Q| \times (|w| + 2) \times (|\Gamma|^C \times (C + 2))^k$ where $C = c \cdot \ilog{|w|}$.

For an input string $w$,
we write $\mathcal{I}(w)$ to denote the initial configuration on $\vdash w \dashv$:
\[
\mathcal{I}(w) = \tuple{q_{\text{init}},\ 0,\ (\vdash \Box^C \dashv, 0), \ldots, (\vdash \Box^C \dashv, 0)} \text{ where } C = c \cdot \ilog{|w|}.
\]

Let $\xi = \tuple{p, i, (T_1, i_1) \ldots, (T_x, i_x), \ldots, (T_k, i_k)}$ be a valid configuration on $\vdash w \dashv$.
For each transition rule $\delta$, we define a labelled transition relation $\overset{\smash{\delta}}{\Rightarrow}$ on \emph{valid} configurations as follows:
\[\hspace{-7pt}
\AXC{$p \xrightarrow[\phantom{I}]{\tau\,\mid\,\theta} q \in \Delta$}
\AXC{$(\vdash w \dashv)[i] = \tau$}
\BIC{$\xi \overset{\delta}{\Rightarrow} \tuple{q,\ i + \theta,\ (T_1, i_1), \ldots, (T_k, i_k)}$}
\DisplayProof\quad
\AXC{$p \xrightarrow[\smash{T_x}]{\kappa\,\mapsto\,\kappa'\ \mid\ \theta} q \in \Delta$}
\AXC{$\kappa = T_x[i_x]$}
\BIC{$\xi \overset{\delta}{\Rightarrow} \tuple{q, i, (T_1, i_1), \ldots, (T_x[i_x] := \kappa', i_x + \theta), \ldots, (T_k, i_k)}$}
\DisplayProof
\]
where $T_x[i_x] := \kappa'$ is the new working tape obtained by writing $\kappa'$ to the position $i_x$.

We write $\nlog(c, k)$ for the set of $c$-bounded $k$-working-tapes log-space NTMs.
If $c$ and $k$ is not important, by abusing notation, we simply write $\nlog$.
For $M \in \nlog$ and an input string $w$, we write $M(w, \xi)$ to denote the set of valid and acceptable configurations that are reachable from a valid configuration $\xi$ on $\vdash w \dashv$:
\[
M(w, \xi) = \set{ \xi'\ :\ \xi \Rightarrow^* \xi',\,\xi' = \tuple{q_\text{acc}, i, \mathcal{T}},\ \text{$q_\text{acc}$ is an accepting state of $M$} },
\]
where $\mathcal{T}$ is a sequence of pairs of a working tape and an index $(T_1, i_1) \ldots (T_k, i_k)$.
Now the language $L(M)$ is defined as: $L(M) = \set{ w : M(w, \mathcal{I}(w)) \neq \emptyset }$.

Here we state a useful proposition, which will be used below sometimes.
\begin{proposition}\label{prop:space-for-configurations}
Let $M \in \nlog(c, k)$. For any input $w$,
to represent each valid configuration or equivalently store $|\textbf{Valid}(w)|$,
we need an extra $O(c \cdot k)$-bounded working tape.
\end{proposition}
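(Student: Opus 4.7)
The plan is to directly encode a valid configuration on a single auxiliary working tape and verify that the encoding length fits within an $O(c \cdot k)$-bounded tape. Recall that a $c'$-bounded working tape (for any constant $c'$) holds $c' \cdot \ilog{|w|}$ cells over the finite alphabet $\Gamma$, giving $\Theta(c' \cdot \log|w|)$ bits of storage; our goal is therefore to show that a configuration can be serialized using $O(c \cdot k \cdot \log|w|)$ bits.

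First I would partition a valid configuration $\tuple{q, i, (T_1, i_1), \ldots, (T_k, i_k)}$ into its components and bound each separately. The state $q \in Q$ takes $O(1)$ bits since $|Q|$ is fixed by $M$. The input head position $i \in \{0, \ldots, |w|+1\}$ takes $\ilog{|w|+2} = O(\log|w|)$ bits. For each $x \in \{1,\ldots,k\}$, the tape contents $T_x \in (\vdash \Gamma^C \dashv)$ with $C = c \cdot \ilog{|w|}$ require $C \cdot \ilog{|\Gamma|} = O(c \cdot \log|w|)$ bits, and the tape head index $i_x \in \{0, \ldots, C+1\}$ needs only $O(\log(c\log|w|))$ bits, which is a lower-order term. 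Summing over the $k$ working tapes yields a total of $O(1) + O(\log|w|) + k \cdot O(c \cdot \log|w|) = O(c \cdot k \cdot \log|w|)$ bits, matching the capacity of an $O(c \cdot k)$-bounded working tape. The same counting argument (taking the logarithm of the product $|Q| \times (|w|+2) \times (|\Gamma|^C \times (C+2))^k$) shows that $\ilog{|\textbf{Valid}(w)|} = O(c \cdot k \cdot \log|w|)$, so that binary counters ranging over $|\textbf{Valid}(w)|$ also fit in such a tape.

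There is no real obstacle here; the statement is essentially a bookkeeping observation. The only subtlety worth highlighting in the write-up is that the $k$ copies of the tape contents dominate, while the head positions and state contribute strictly lower-order terms, so the constant hidden in the $O(\cdot)$ depends on $|\Gamma|$ and $|Q|$ but not on the input. I would conclude by remarking that because this bound holds uniformly in $w$, any routine that needs to store, enumerate, or compare valid configurations of $M$ can do so using a single extra working tape whose capacity parameter is an appropriate constant multiple of $c \cdot k$.
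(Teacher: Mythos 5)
Your proposal is correct and follows essentially the same route as the paper: both take the logarithm of $|\textbf{Valid}(w)| = |Q| \times (|w|+2) \times (|\Gamma|^C \times (C+2))^k$ and observe that the dominant term is $k \cdot c \cdot \log|w| \cdot \log|\Gamma| = O(c\cdot k)\cdot\log|w|$. Your component-by-component itemization is just a more explicit presentation of the same counting argument.
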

\begin{proof}
Please recall that $|\textbf{Valid}(w)| = |Q| \times (|w| + 2) \times (|\Gamma|^C \times (C + 2))^k$ where $C = c \cdot \ilog{|w|}$.
So, $\log |\textbf{Valid}(w)| = (k \cdot c \cdot \log |w|) \log|\Gamma| + \cdots = O(c \cdot k) \cdot \log |w|$.
Thus, we need an $O(c \cdot k)$-bounded tape.
\end{proof}

On log-space NTMs, we can solve the problem-\textit{\textbf{(B)}} in Section \ref{sec:intro:challenge}.
\begin{proposition}\label{prop:finite-computation-tree}
Let $M \in \nlog(c, k)$. There exists $N \in \nlog(O(c \cdot k), k+1)$ such that $L(M) = L(N)$ and,
for any input $w$, all computations of $N$ starting from $w$ eventually halt.
\end{proposition}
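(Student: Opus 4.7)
The plan is a pigeonhole argument: any accepting computation of $M$ can be compressed—by excising cycles through repeated valid configurations—to length at most $|\textbf{Valid}_M(w)|$, so it suffices to equip $N$ with a step counter whose bound exceeds $|\textbf{Valid}_M(w)|$ and to force rejection once the counter is exhausted. By Proposition~\ref{prop:space-for-configurations}, such a counter fits on an additional $O(c \cdot k)$-bounded working tape, which is exactly the budget granted by the statement.

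Concretely, I would define $N$ to have $k+1$ working tapes, all with constant $c' = O(c \cdot k)$, so that each tape has capacity $C' = c' \cdot \ilog{|w|}$. The first $k$ tapes simulate the tapes of $M$ (note $C' \geq C = c \cdot \ilog{|w|}$, so no simulation problems arise). The extra tape $T_{k+1}$ is used as a counter. At start-up, $N$ performs a one-shot initialization by walking $T_{k+1}$'s head rightward from $\vdash$ to $\dashv$, writing the maximum symbol of $\Gamma$ in every cell; this encodes an integer at least $|\Gamma|^{C'} \geq |w|^{O(c \cdot k)} \geq |\textbf{Valid}_M(w)|$. Thereafter, each simulated $M$-step is wrapped by a decrement on $T_{k+1}$; if the decrement would drop the counter below zero, $N$ transitions to a rejecting halting state. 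Accepting states of $M$ are kept as accepting states of $N$. The in-place decrement needs no auxiliary storage beyond $T_{k+1}$ itself, so the $O(c \cdot k)$-bounded budget suffices.

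For $L(M) = L(N)$, note that if $M$ accepts $w$, one accepting path has length $\leq |\textbf{Valid}_M(w)|$ by cycle removal, and $N$ follows it before the counter expires; conversely, any accepting path of $N$ projects to an accepting path of $M$ by forgetting the $T_{k+1}$-maintenance. Halting on every branch follows because the initialization phase is a bounded left-to-right scan of $T_{k+1}$ and the simulation phase runs for at most $|\textbf{Valid}_M(w)|$ rounds, each round using $O(C') = O(c \cdot k \cdot \log |w|)$ substeps for the decrement. The only real subtlety—and the step to be careful with—is verifying that the initialization and the decrement subroutines themselves are genuinely finite: both are simple deterministic scans on $T_{k+1}$ between its end markers, so they introduce no new sources of nontermination. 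This forces the stated constant $c' = O(c \cdot k)$ rather than a tighter one, but no further difficulty arises.
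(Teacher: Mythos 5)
Your proposal is correct and follows essentially the same route as the paper: bound accepting computations by $|\textbf{Valid}_M(w)|$ via cycle removal, and enforce this with a step counter on one extra $O(c\cdot k)$-bounded tape, justified by Proposition~\ref{prop:space-for-configurations}. Your write-up merely spells out the counter's initialization and decrement in more detail than the paper's brief argument.
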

\begin{proof}
The number of reachable configurations is bounded by $\mathcal{B} = |\mathbf{Valid}_M(w)|$.
So, we can safely ignore all paths $P$ whose length $> \mathcal{B}$ without changing the accepting language.
To check if the current path length $> \mathcal{B}$, we need an $O(c \cdot k)$-bounded tape by Proposition~\ref{prop:space-for-configurations}.
\end{proof}

We note that properties, like Proposition~\ref{prop:finite-computation-tree}, are insufficient to show that a language class is closed under complement.\footnote{
For example, we can translate any nondeterministic pushdown automata to real-time ones, which do not have $\epsilon$-transitions.
However, the class of context free languages is not closed under complement.}
Thus, the problem-(\textit{\textbf{A}}) in Section \ref{sec:intro:challenge} is essentially hard; indeed, it is the interesting part of Immerman--Szelepcs{\'e}nyi theorem~\cite{Immerman:1988,Szelepcsenyi:1988}.
In the following subsection, we revisit their theorem along with introducing log-space nested-oracles NTM.

\subsection{Augmenting NTM with Nested Oracles}

We extend log-space NTM with finitely nested oracles (or subroutines) to naturally handle nested lookaheads of \rewbl.
Similar to our definition of log-space NTMs, we consider $c$-bounded $k$-tapes log-space nested oracles NTMs.

To allow nested oracle calling, we inductively define our machines.
First, as the base case, we write $\oraclem^0(c, k) = \nlog(c, k)$ to denote machines without oracles.
Next, as the induction step, we define $\oraclem^{x+1}(c, k)$ using $\oraclem^x(c, k)$ as follows.
Each $M \in \oraclem^{x+1}(c, k)$ is a tuple $(k, c, Q, q_{\text{init}}, Q_F, \Sigma, \Gamma, \Box, \Delta)$.
The only difference from log-space NTMs are two new types of transition rules, called \emph{oracle transition rules}:
\begin{itemize}
\item $p \xrightarrow{\in N} q$: asking an oracle $N \in \oraclem^y(c, k)$ where $y \leq x$ in the positive context.
\item  $p \xrightarrow{\not\in N} r$: asking an oracle $N \in \oraclem^y(c, k)$ where $y \leq x$ in the negative context.
\end{itemize}

We write $\oraclem^{\omega}(c, k)$ for $\bigcup^{\infty}_{i = 0} \oraclem^i(c, k)$.
To denote the nesting level of machines $M \in \oraclem^{\omega}(c, k)$,
we inductively define the function $\depth$ as follows:
\[
\begin{array}{l}
\depth(M) = \begin{cases}
0 & \text{if $M \in \oraclem^0(c, k)$}, \\
1 + \max\set{ \depth(N) : p \xrightarrow{\in N} q,\ p'\xrightarrow{\notin N} q \in \Delta(M) } & \text{otherwise}.
\end{cases}
\end{array}
\]
If $c$ and $k$ are not important, we simply write as $\oraclem^n$ and $\oraclem^\omega$.

Now, we define the semantics of oracle transitions as follows:
\[
\AXC{$p \xrightarrow{ \in N } q$}
\AXC{$N(w, \tuple{q^N_{\text{init}}, i, \mathcal{T}}) \ni \tuple{r,\, j,\ \mathcal{U}}$}
\BIC{$\tuple{p, i, \mathcal{T}} \Rightarrow \tuple{q, i, \mathcal{U}}$}
\DisplayProof
\qquad
\AXC{$p \xrightarrow{ \notin N } r$}
\AXC{$N(w, \tuple{q^N_{\text{init}}, i, \mathcal{T}}) = \emptyset$}
\BIC{$\tuple{p, i, \mathcal{T}} \Rightarrow \tuple{r, i, \mathcal{T}}$}
\DisplayProof
\]
where $q^N_\text{init}$ is the initial state of $N$, $\mathcal{T}$ and $\mathcal{U}$ is a sequence of pairs of a working tape and an index $(T_1, i_1) \ldots (T_k, i_k)$,
and the function $N(w, \xi) = \set{ \xi' : \xi \Rightarrow^* \xi', \xi' = \tuple{q, i, \mathcal{T}}, q \in Q_F(M) }$ is defined inductively on the depth of machines.

The semantics of $p \xrightarrow{\in N} q$ means that:
(1) we call an oracle (subroutine) $N$ for $\vdash w \dashv$ with the current position $i$ and the current working tapes $\mathcal{T}$ as its initial working tapes;
and (2) if $N$ accepts $w$, then we enter a state $q$ with the original position $i$ and working tapes $\mathcal{U}$ of $N$'s accepting configuration.
The semantics of $p \xrightarrow{\notin N} q$  means that, if $N$ does not accept $w$, we enter a state $r$ with the original position and working tapes $\mathcal{T}$.\footnote{
For simplicity, our oracle formalization differs from traditional treatments [16, 25, 3, 21] in some points: (1) we omit the use of oracle tapes, and (2) we allow inheriting configurations from called oracles. Despite these differences, our definition is adequate for Theorem 13 and for \rewbl{} in Section 6.}

\begin{example}\label{log-space-example1}
Using log-space nested-oracles NTMs, we simulate the following \rewbl:
\[
E_{\text{reach}} = (V^*)_{\textsc{Cur}}\,\#\ \bigl(?(\Sigma^*\,\#\,\var{\textsc{Cur}}\to(V^*)_{\textsc{Cur}}\,\#)\bigr)^*\,\Sigma^*\,\#\,\var{\textsc{Cur}}.
\]

For the sake of simplicity, we assume that $V = \set{ \star }$ is a unary alphabet.
The subexpression $(\Sigma^*\,\#\,\var{\textsc{Cur}}\to(V^*)_{\textsc{Cur}}\,\#)$ is simulated by the following log-space NTM $M \in \nlog$:
\[
\tikz {
\tikzset{largen/.style={fill=white, draw=black, rectangle, rounded corners=4pt, inner sep=6pt}}

\node[state, initial, initial text={}, fill=white, draw=black, minimum size=0pt, inner sep=1pt]
(qinit) {$q_\text{init}$};

\draw[-Latex] (qinit) edge [loop above, out=60, in=120, looseness=2.5] node {$\Sigma$} (qinit);

\node[state, fill=white, draw=black, minimum size=2pt, inner sep=1pt]
(q1) [right = .8cm of qinit] {$q_1$};

\draw[->] (qinit) edge[auto] node{\scriptsize $\#$} (q1);

\node[state, fill=white, draw=black, minimum size=2pt, inner sep=1pt]
(q2) [right = 4cm of q1] {$q_2$};

\draw[middle dotted line] (q1) -- node[above]{\scriptsize \begin{tabular}{c}Checking if the substring \\ from the current head position \\ equals to \textsc{Cur}\end{tabular}} (q2);

\node[state, fill=white, draw=black, minimum size=2pt, inner sep=1pt]
(q3) [right = .8cm of q2] {$q_3$};

\draw[->] (q2) edge[auto] node{$\to$} (q3);

\node[state, fill=white, draw=black, minimum size=2pt, inner sep=1pt]
(q4) [right = 2cm of q3] {$q_4$};

\draw[middle dotted line] (q3) -- node[above]{\scriptsize \begin{tabular}{c}Store the number of \\ consecutive $\star$'s \\ to \textsc{Cur}\end{tabular}} (q4);

\node[accepting, largen]
(q5) [right = .8cm of q4] {$q_\text{acc}$};

\draw[->] (q4) edge[auto] node{\scriptsize $\#$} (q5);
}.
\]
Using $M$, we give the following machine $N \in \oraclem^1$, clearly accepting $L(E_\text{reach})$:
\[
\tikz {
\tikzset{largen/.style={fill=white, draw=black, rectangle, rounded corners=4pt, inner sep=6pt}}

\node[state, initial, initial text={}, fill=white, draw=black, minimum size=0pt, inner sep=1pt] (qinitpre) {$q_\text{init}$};

\node[state, fill=white, draw=black, minimum size=0pt, inner sep=1pt]
(qinit) [right = .6cm of qinitpre] {$q_0$};

\node[state, fill=white, draw=black, minimum size=2pt, inner sep=1pt]
(q1) [right = 1.5 cm of qinit] {$q_1$};

\draw[->] (qinitpre) edge[auto] node{$\vdash$} (qinit);

\draw[middle dotted line] (qinit) -- node[above]{\scriptsize  \begin{tabular}{c}Store the number ~of \\ consecutive $\star$'s \\ to \textsc{Cur}\end{tabular}} (q1);

\node[state, fill=white, draw=black, minimum size=2pt, inner sep=1pt]
(q2) [right = .9cm of q1] {$q_2$};

\draw[->] (q1) edge[auto] node{\scriptsize $\#$} (q2);

\draw[-Latex] (q2) edge [loop above, out=60, in=120, looseness=4.5] node {$\in M$} (q2);

\node[state, fill=white, draw=black, minimum size=2pt, inner sep=1pt]
(q3) [right = .8cm of q2] {$q_3$};

\draw[->] (q2) edge [above] node {\scriptsize nop} (q3);

\node[state, fill=white, draw=black, minimum size=2pt, inner sep=1pt]
(q4) [right = .8cm of q3] {$q_4$};

\draw[->] (q3) edge[auto] node{\scriptsize $\#$} (q4);
\draw[-Latex] (q3) edge [loop above, out=60, in=120, looseness=4.5] node {\scriptsize $\Sigma$} (q3);

\node[state, fill=white, draw=black, minimum size=2pt, inner sep=1pt]
(q5) [right = 3cm of q4] {$q_5$};

\draw[middle dotted line] (q4) -- node[above]{\scriptsize \begin{tabular}{c}Checking if the substring \\ from the current head position \\ equals to \textsc{Cur}\end{tabular}} (q5);

\node[accepting, largen] (q6) [right = .8 cm of q5] {$q_\text{acc}$};

\draw[->] (q5) edge[auto] node{$\dashv$} (q6);
}
\]
where edges labelled with ``nop'', $\overset{\text{nop}}{\longrightarrow}$, mean transitions that only change states and do not depend scanning symbol.
It just a syntax sugar because we can define  $p \overset{\text{nop}}{\longrightarrow} q$ by a set of transition rules $\set{ p \xrightarrow{ \tau\,\mid\,0 } q : \tau \in \Sigma \cup \set{ \vdash, \dashv } }$.
\end{example}
\begin{example}
We can also accept the language of non-reachability problems:
\[
L_{\text{non-reach}} = \set{ s\,\#\,x_1 \to y_1 \# \cdots \# x_n \to y_n \# t : \text{there is no path from $s$ to $t$} }.
\]
To recognize this language, we use the following $M_\text{non-reach} \in \oraclem^{2}$: \\
\tikz[baseline=-3pt] {
\tikzset{largen/.style={fill=white, draw=black, rectangle, rounded corners=4pt, inner sep=6pt}}

\node[state, initial, initial text={}, fill=white, draw=black, minimum size=0pt, inner sep=1pt] (qinitpre) {$q_\text{init}$};

\node[state, fill=white, draw=black, minimum size=0pt, inner sep=1pt]
(qinit) [right = .8cm of qinitpre] {$q_0$};

\draw[->] (qinitpre) edge[auto] node{$\vdash$} (qinit);

\node[state, fill=white, draw=black, minimum size=2pt, inner sep=1pt]
(q1) [right = 1 cm of qinit] {$q_1$};

\draw[->] (qinit) edge[auto] node{\small $\in N''$} (q1);

\node[accepting, largen] (q2) [right = 1 cm of q1] {$q_\text{acc}$};

\draw[->] (q1) edge[auto] node{\small $\notin N$} (q2);
}
where $N'' \in \nlog$ recognizes the language represented by the expression $V^*\,\#\,(V^* \to V^* \#)^*\,V^*$.
\end{example}

\subsection{Collapsing $\oraclem^\omega$ by Immerman--Szelepcs{\'e}nyi theorem}

Thanks to Proposition~\ref{prop:finite-computation-tree}, we can give a decision procedure to check $w \in_? L(M_{\text{non-reach}})$ for our above examples.
However, it is not clear that $\oraclem^2 =_? \nlog$ and more generally $\oraclem^\omega =_? \nlog$.
For example, is there a log-space NTM $N$ that recognizes $L_{\text{non-reach}}$\,?

Fortunately, the class \nlog{} is closed under complement, $\nlog = \textbf{co-}\nlog$.
This result is known as Immerman--Szelepcs{\'e}nyi theorem~\cite{Immerman:1988,Szelepcsenyi:1988}.
We employ their proof to collapse $\oraclem^x$ for some $x$ to log-space NTMs $\oraclem^0 = \nlog$.

\def\immerman#1{\overline{#1}}

\begin{lemma}\label{lemma:immermans-construction}
Let $M \in \nlog(c, k)$ for some $c$ and $k$.
There is a machine $\immerman{M} \in \nlog(O(c \cdot k), k + \partial)$
where $L(\immerman{M}) = \Sigma^*\!\setminus\!L(M)$ and $\partial$ is independent of $M$, $c$, and $k$.
\end{lemma}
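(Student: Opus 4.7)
The plan is to adapt the Immerman--Szelepcs\'enyi inductive-counting argument to our $c$-bounded $k$-tape setting, while carefully ensuring that every auxiliary quantity fits on an $O(c \cdot k)$-bounded working tape. First, by Proposition~\ref{prop:finite-computation-tree} I may assume, at the cost of one extra tape and a blow-up of the space bound to $O(c \cdot k)$, that every computation of $M$ on an input $w$ halts within $\mathcal{B} = |\textbf{Valid}_M(w)|$ steps. Then $w \notin L(M)$ iff no accepting valid configuration is reachable from $\mathcal{I}(w)$ within $\mathcal{B}$ steps, and by Proposition~\ref{prop:space-for-configurations} both a configuration and a counter up to $\mathcal{B}$ each fit on an $O(c \cdot k)$-bounded tape.

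Define $R_i(w)$ to be the set of valid configurations reachable from $\mathcal{I}(w)$ in at most $i$ steps. The key subroutine $\textsc{InR}_{i+1}$ takes $|R_i|$ written on a tape together with a target configuration $\xi$ and nondeterministically decides whether $\xi \in R_{i+1}$: enumerate $\eta$ through all valid configurations in lexicographic order and, for each one, nondeterministically either guess a length-$\leq i$ computation path from $\mathcal{I}(w)$ to $\eta$ (storing only the current configuration and its successor on two tapes while decrementing a step counter) or skip $\eta$; increment a counter $K$ on every successful guess; if a successful $\eta$ equals $\xi$ or admits a one-step transition to $\xi$, raise a flag. After the enumeration, check that $K = |R_i|$; if so, answer according to the flag, otherwise reject. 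This is precisely the standard certificate-based membership test underlying $\nlog = \textbf{co-}\nlog$.

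Using this subroutine, $\overline{M}$ computes $|R_0| = 1$ and then, for $i = 0, 1, \ldots, \mathcal{B} - 1$, computes $|R_{i+1}|$ from $|R_i|$ by enumerating all valid configurations and calling $\textsc{InR}_{i+1}$ on each, accumulating the result into a fresh counter which replaces $|R_i|$ after the pass. After $\mathcal{B}$ iterations it holds $|R_{\mathcal{B}}|$ on a tape. Finally, $\overline{M}$ performs one more enumeration: for each valid configuration it nondeterministically decides membership in $R_{\mathcal{B}}$ by the same certificate technique, rejecting if any certified reachable configuration is accepting; if the total count of reachable configurations matches $|R_{\mathcal{B}}|$ and none of them is accepting, $\overline{M}$ accepts.

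For the space bound, the extra state maintained across all subroutines consists of a fixed number of objects, each of which is either a valid configuration or an integer up to $\mathcal{B}$; by Proposition~\ref{prop:space-for-configurations} each such object fits on an $O(c \cdot k)$-bounded tape. Since $\textsc{InR}_{i+1}$ is invoked sequentially rather than recursively, its workspace can be reused across calls, so the total number of extra tapes is a fixed constant $\partial$ independent of $M$, $c$, and $k$. The main obstacle is exactly this bookkeeping: verifying that the nested ``iterate over $\eta$'' and ``guess a path'' loops share a constant pool of tapes and that every counter is reliably reset between outer iterations, so that $\partial$ is a universal constant rather than one that grows with $c$ or $k$.
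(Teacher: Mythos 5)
Your proposal is correct and follows essentially the same route as the paper: the Immerman inductive-counting construction (compute the number of reachable configurations layer by layer, then use that count to certify non-reachability of any accepting configuration), with the space accounting done via Proposition~\ref{prop:space-for-configurations} to conclude that a constant number $\partial$ of extra $O(c\cdot k)$-bounded tapes suffices. The only cosmetic difference is your initial appeal to Proposition~\ref{prop:finite-computation-tree}, which is harmless but unnecessary since every reachable configuration is automatically reachable within $|\textbf{Valid}_M(w)|$ steps.
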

\begin{proof}
We review Immerman's original construction in \cite{Immerman:1988}.
For the reader who would like to know more detailed explanation about his construction, we recommend some literature~\cite{sipser}.
His construction consists of the following two parts:
\begin{itemize}
\item Let \textsc{Start} be a configuration of $M$.
First, we compute the number $C$, the total number of configurations reachable from \textsc{Start}.
\item Next, using $C$,  we check if there is a path from \textsc{Start} to an acceptable configuration.
\end{itemize}

\def\conf{\mathbf{Valid}_M(w)}
The first part is accomplished by the following pseudocode~\cite[Lemma~2]{Immerman:1988}.
\begin{lstlisting}[basicstyle=\ttfamily,mathescape,columns=spaceflexible]
global $w$; // input string

def one_step$_M$($x$, $x'$): // $x$ and $x'$ $\text{\small are valid configurations}$
    foreach transition rule $\delta \in \Delta(M)$: // $\Delta(M)$ $\text{\small is the set of transition rules}$
        if $x \overset{\smash{\delta}}{\Rightarrow} x'$: return True;
    return False;

def counting$_M$($\textsc{Start}$):
    $\mathit{cur} \gets 1$; // $\text{\small the number of reachable configurations within}$ $\leq \mathit{dist}$ $\text{\small steps}$.
    for($\mathit{dist} \gets 0$,$\mathit{next} \gets 0$; $\mathit{dist} < |\conf|$; $\mathit{dist}$ += 1,$\mathit{cur} \gets \mathit{next}$,$\mathit{next} \gets 0$):
        foreach $x \in \conf$:
            $count \gets 0$; $\mathit{found\_x} \gets$ false;
            foreach $y \in \conf$: 
                $z \gets \textsc{Start}$; // $\text{\small search a path from}$ $\textsc{Start}$ $\text{\small to}$ $y$ (size $\leq \mathit{dist}$)
                for($i \gets 0$; $z \neq y$ & $i < \mathit{dist}$; $i$ += 1):
                    $z' \gets \text{Nondeterministically generated configuration}$;
                    if one-step($z$, $z'$): $z \gets z'$;
                    else: break;
                if $z = y$:
                    $\mathit{count}$ += $1$;
                    if one-step($y$, $x$): {  $\mathit{next}$ += 1; $\mathit{found\_x} \gets$ true; break; }
            if $\neg\mathit{found\_x}$ & $\mathit{count} \neq \mathit{cur}$: Halt and reject;
    return $\mathit{cur}$;
\end{lstlisting}
This function requires extra working tapes at least for $\mathit{cur}, \mathit{dist}, \mathit{next}, x, \mathit{count}, y, i, z, z'$.
By Proposition~\ref{prop:space-for-configurations}, for these variables, we need $O(c \cdot k)$-bound working tapes.

The second part is accomplished by the following pseudocode~\cite[Lemma~1]{Immerman:1988}.
\begin{lstlisting}[basicstyle=\ttfamily,mathescape,columns=spaceflexible]
def judge$_M$($\textsc{Start}$):
    $C \gets $ counting$_M$($\textsc{Start}$);
    $\mathit{count} \gets 0$;
    foreach $x \in \conf$:
        $y \gets \textsc{Start}$;
        for($i \gets 0$; $i \leq C$; $i$ += $1$):
            $y' \gets \text{Nondeterministically generated configuration}$;
            if one-step($y$, $y'$):
                $y \gets y'$;
                if $y$ is an accepting configuration: return Some($y$);
                if $y = x$: { $\mathit{count}$ += 1; break; }
            else: break;
    if $\mathit{count} = C$: return None;
    else: Halt and reject;
\end{lstlisting}
This function also requires extra $O(c \cdot k)$-bound working tapes.

Now we can build $\immerman{M} \in \nlog(O(c \cdot k), k + \partial)$ as a log-space NTM that simulates the function \texttt{judge}
and then accepts inputs if the result of \texttt{judge} is \texttt{None}.
\end{proof}

Repeatedly applying Immerman's construction collapses nested oracle machines to machines without oracles~\cite[Corollary~2]{Immerman:1988}.
\begin{theorem}\label{thm:collapse}
Let $M \in \oraclem^n(c, k)$ be a log-space $n$-nested-oracles NTM.
There exists a log-space NTM $N \in \nlog(O(c \cdot k^n), O(k \cdot n))$ such that $L(M) = L(N)$.
\end{theorem}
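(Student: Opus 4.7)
The plan is to proceed by induction on the oracle-nesting depth $n$. The base case $n=0$ is immediate, since $\oraclem^0(c,k) = \nlog(c,k)$, and we simply take $N=M$.

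For the inductive step, assume the result holds through depth $n$, and let $M \in \oraclem^{n+1}(c,k)$. Since $M$ has a finite transition relation, it invokes only finitely many distinct oracle machines $N_1, \ldots, N_r$, each living in some $\oraclem^{y_i}(c,k)$ with $y_i \leq n$. By the induction hypothesis, every $N_i$ is equivalent to an oracle-free machine $N_i^\star \in \nlog(O(c \cdot k^n), O(k \cdot n))$. I would build $N$ by inlining these translated oracles into $M$, eliminating one oracle call at a time.

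The inlining works as follows. For a positive oracle call $p \xrightarrow{\in N_i} q$, splice in a fresh copy of $N_i^\star$: identify its working-tape components with the corresponding tapes of $M$ so that the configuration-inheritance semantics of the oracle rule is respected, add a silent transition from $p$ into the initial state of $N_i^\star$, and add a silent transition from every accepting state of $N_i^\star$ to $q$. For a negative oracle call $p \xrightarrow{\notin N_i} q$, first apply Lemma~\ref{lemma:immermans-construction} to $N_i^\star$ to obtain a complement machine $\overline{N_i^\star}$ in $\nlog(O(c \cdot k^{n+1}), O(k \cdot n) + \partial)$, and then inline $\overline{N_i^\star}$ in the same way. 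Correctness is immediate: $\overline{N_i^\star}$ accepts exactly when $N_i$ rejects. To realize the negative-call requirement that the caller's tapes be restored on return, I would allocate $\overline{N_i^\star}$ a disjoint block of fresh tapes; this is sound because the negative-call semantics never propagates the oracle's tape state back to the caller.

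The main technical obstacle will be the careful bookkeeping of the capacity parameter $c$ and the tape count $k$ across inductive layers. Each application of Lemma~\ref{lemma:immermans-construction} multiplies the capacity by $O(k)$ and enlarges the tape count by the fixed constant $\partial$, both of which are independent of the input machine. Starting from $\nlog(O(c \cdot k^n), O(k \cdot n))$ at depth $n$, one complementation yields $\nlog(O(c \cdot k^{n+1}), O(k\cdot n) + \partial) \subseteq \nlog(O(c \cdot k^{n+1}), O(k\cdot(n+1)))$, matching the claimed bound; inlining across the finitely many oracles used by $M$ only contributes constants depending on $M$, which are absorbed into the $O(\cdot)$. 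The delicate point worth verifying carefully is that positive inlining does not itself enlarge the parameters, so that the geometric blow-up of $k^n$ in the capacity is attributable purely to the negative calls handled by Immerman's construction.
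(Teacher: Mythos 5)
Your proposal is correct and follows essentially the same route as the paper: eliminating oracle transitions from the innermost level outward (your induction on depth), inlining positive calls directly with the caller's tapes, and dispatching negative calls through Lemma~\ref{lemma:immermans-construction}, with the same per-level bookkeeping of $c$ and $k$. Your explicit observation that negative calls may use fresh tapes because $\mathcal{T}$ is never propagated back is a sound refinement of a point the paper leaves implicit.
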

\begin{proof}
We eliminate oracle transitions from the innermost to the outermost for $M$ as follows.
We replace $p \xrightarrow{\in N} q$ with $N \in \oraclem^0$ with multiple transition rules that perform:
(1) save the head position $H$ to an extra tape; (2) run $N$; (3) if we reach an accepting configuration $\tuple{q_f, \, \mathcal{T}}$, then we continue $\tuple{q, H, \mathcal{T}}$.
Similarly, we replace $p \xrightarrow{\notin N} q$ with $N \in \oraclem^0$ with multiple transition rules using $\immerman{N}$ obtained by Immerman's construction.
We emphasize that each generation of $\immerman{N}$ increases $c$ and $k$ in the order of the statement of Lemma~\ref{lemma:immermans-construction}.
\end{proof}

\subsection{Membership Problem of Log-space Nested-oracles NTMs}

We now show that the membership problem of log-space nested oracle machines is in \textbf{PSPACE}.
It is a decision problem of the following form:
\begin{description}
\item[Input1] a machine $M \in \oraclem^\omega(c, k)$ ($c$ and $k$ are binary encoded).
\item[Input2] a word $w \in \Sigma^*$.
\item[Output] If $w \in_? L(M)$, return Yes. Otherwise, No.
\end{description}

To show the problem belongs to \textbf{PSPACE}, we would like to repeatedly using Theorem~\ref{thm:collapse}.
However, it is not feasible because such repeatedly application results in a log-space $O(c \cdot k^{|M|})$-bounded tapes machine in general.
It demands $O(c \cdot k^{|M|}) \cdot \log |w|$ space; thus, we cannot simulate it in polynomial size for the input sizes $|M|$ and $|w|$.
To address this problem, we adopt Immerman's construction for $\oraclem^{\omega}$ in an interpreter style.

\begin{theorem}[Membership problem of $\oraclem^{\omega}$ belongs to $\pspace$]
\label{thm:membership-oracle-machines}
Let $w$ be an input word and $M \in \oraclem^\omega(c, k)$ be an input machine where $c$ and $k$ are binary encoded.
We can decide if $w \in_? L(M)$ in polynomial space for $c$, $k$, $|w|$, and $|M|$.
\end{theorem}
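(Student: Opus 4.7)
The plan is to design a recursive polynomial-space decision procedure that simulates $M$ directly on $w$, rather than iteratively applying the collapsing theorem (Theorem~\ref{thm:collapse}), which would blow the space up to $O(c \cdot k^{|M|})$. The key observation is that a single configuration of any sub-machine $N$ appearing in $M$ fits in $O((c \cdot k) \log |w|)$ bits by Proposition~\ref{prop:space-for-configurations}, which is polynomial in the input sizes; hence we can afford to store configurations, and even counters ranging up to $|\mathbf{Valid}_N(w)|$ (which is representable in $O((c \cdot k) \log |w|)$ bits), in polynomial space. Since $\pspace = \text{NPSPACE}$ by Savitch's theorem, we may use nondeterminism freely inside the decider.

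First I would introduce a procedure $\textsc{Accepts}(N, \xi)$ that returns whether $N$ has an accepting computation starting from the valid configuration $\xi$, defined by induction on $\depth(N)$. For an ordinary tape transition, $\textsc{Accepts}$ simulates $N$ step by step, nondeterministically picking a rule and terminating after at most $|\mathbf{Valid}_N(w)|$ steps (cf.~Proposition~\ref{prop:finite-computation-tree}). For an oracle transition $p \xrightarrow{\in N'} q$, it recursively invokes $\textsc{Accepts}$ on $N'$ from the current configuration equipped with $N'$'s initial state and continues along $q$ if the recursive call accepts. For $p \xrightarrow{\notin N'} q$ it invokes the Immerman--Szelepcs\'enyi counting routine of Lemma~\ref{lemma:immermans-construction} applied to $N'$, but with the primitive \emph{one-step} check replaced by recursive calls to $\textsc{Accepts}$ whenever $N'$ itself takes an oracle transition; it then verifies that none of the reachable configurations of $N'$ is accepting.

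Next I would bound the resources. Every recursive call strictly decreases $\depth$, so the call stack has height at most $\depth(M) \leq |M|$. Each stack frame holds a constant number of configurations, tape snippets, counters, and loop indices, each of polynomial bit-length. Therefore the total space is at most $|M|$ times a polynomial in $c$, $k$, and $\log|w|$, which is still polynomial in $c$, $k$, $|w|$, and $|M|$. Since PSPACE is closed under complement and under nondeterminism, this procedure can be made a deterministic polynomial-space decider for $w \in_? L(M)$.

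The main obstacle is the careful interfacing with Immerman--Szelepcs\'enyi counting inside the negative-oracle case: one must argue that the one-step relation of $N'$ remains computable in polynomial space even though $N'$ may itself contain oracles of smaller depth, which we handle by recursing into $\textsc{Accepts}$ at exactly those points rather than inlining a collapsed transition relation. A secondary technicality is implementing the recursion explicitly as a stack of $|M|$ frames on a working tape of the outer PSPACE machine, so that the per-frame polynomial budget composes into a global polynomial bound without any exponential dependence on $\depth(M)$.
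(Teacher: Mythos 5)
Your overall strategy is essentially the paper's own: rather than iterating Theorem~\ref{thm:collapse}, the paper also evaluates Immerman's \texttt{judge}/\texttt{counting} routines in an interpreter style, generating code for every sub-machine occurring in $M$, dispatching oracle transitions inside \texttt{one-step} to recursive calls of \texttt{judge}, and bounding the call stack by $\depth(M)\leq|M|$ frames, each of size $O((c\cdot k)\log|w|)$ by Proposition~\ref{prop:space-for-configurations}. Your recursion on $\depth$, the per-frame space accounting, and the use of Immerman--Szelepcs\'enyi counting to certify negative oracle answers inside a nondeterministic decider all match the paper's argument.

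There is one concrete mismatch you should repair. You make $\textsc{Accepts}(N',\xi)$ Boolean-valued, but the semantics of a positive oracle transition $p\xrightarrow{\in N'}q$ is $\tuple{p,i,\mathcal{T}}\Rightarrow\tuple{q,i,\mathcal{U}}$, where $\mathcal{U}$ is the working-tape content of the accepting configuration that $N'$ reaches: the caller \emph{inherits} the oracle's tapes (this is precisely what makes positive lookaheads destructive on variables when \rewbl{} is translated in Section~\ref{sec:rewbl-to-nlog}). A yes/no answer therefore does not suffice to continue simulating the caller; the paper's \texttt{judge} returns \texttt{Some}$(y)$ with an accepting configuration $y$, and its \texttt{one-step} forms the successor $\tuple{q,i,\mathcal{U}}$ from that $y$. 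As written, your procedure would resume at $q$ with the stale tapes $\mathcal{T}$. The fix is local---have the recursive call return a (nondeterministically chosen) accepting configuration rather than a bit---and it does not disturb your space analysis, since a configuration still occupies only $O((c\cdot k)\log|w|)$ bits per frame. With that adjustment the proof goes through exactly as in the paper.
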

\begin{proof}
First, we extend the function \texttt{one-step} for oracle transitions as follows:
\begin{lstlisting}[basicstyle=\ttfamily,mathescape,columns=spaceflexible]
def one-step$_{M_i}$($x$, $x'$):
    foreach $\delta \in \Delta(M_i)$:
        if $\delta$ is a normal transition & $c \overset{\smash{\delta}}{\Rightarrow} d$: return True;
        else if $x = (\_, i, \mathcal{T})$:
            if $\delta = p \xrightarrow{\smash{\in N}} q$:
                match $\text{judge}_{N}(  \tuple{q_{\text{init}}(N), i, \mathcal{T}} )$ with // $q_{\text{init}}(N)$$\text{\small\ is \textit{N}'s initial state}$
                | Some($\mathcal{U}$) -> return $(x' =_? \tuple{q, i, \mathcal{U}})$;
                | None -> return False;
            if $\delta = p \xrightarrow{\smash{\notin N}} r$:
                match $\text{judge}_{N}(  \tuple{q_{\text{init}}(N), i, \mathcal{T}} )$ with
                | None -> return $(x' =_? \tuple{r, i, \mathcal{T}})$;
                | Some(_) -> return False;
    return False;
\end{lstlisting}

Next, we generate the codes of $\texttt{one-step}_{M_i}$, $\texttt{counting}_{M_i}$, and $\texttt{judge}_{M_i}$ for
all oracle machines $M_i$ that appears in $M$.
Such generation is carried out in polynomial-time for $|M|$. The total size of generated code is also polynomial in $|M|$.

We can also provide an interpreter for the generated code in polynomial time for $|M|$.
While this interpreter needs a call stack for function calls, its depth is bounded by $\mathit{depth}(M) \leq |M|$.
Additionally, the size of each stack frame is bounded by $O((c \cdot k) \log|w|)$ by Proposition~\ref{prop:space-for-configurations}.
From the above argument, we can check $w \in_? L(M)$ using (nondeterministic) polynomial space with respect to $c$, $k$, $|w|$, and $|M|$.
\end{proof}

\textbf{Remark:} We can refine the above statement as follows: we decide $w \in_? L(M)$ in polynomial space for $c$, $|w|$, and $|M|$.
This is because $k \leq |M|$ is always true, even when $k$ is binary encoded.
(Of course, we assume all tapes $T_1, \ldots, T_k$ are used in some transition rules.)
However, we cannot refine it for $c$: since, if the input machine $M$ is binary encoded, $c = \Omega(2^{|M|})$ holds.
Fortunately, as we will see below, we can assume $c = 1$ when simulating \rewbl{}.
It is crucial for showing the \pspace-completeness of the membership problem of \rewbl.

\section{From \rewbl{} to Log-Space Nested Oracles NTM}\label{sec:rewbl-to-nlog}

We finally show $\rewbllang \subseteq \nlog$ by translating {\rewbl}s to $\oraclem^\omega$.

\begin{theorem}\label{rewbl-to-oracles}
Given a \rewbl{} expression $E$, we can translate it to $M \in \oraclem^{\omega}(1, O(|E|))$ such that $L(E) = L(M)$.
\end{theorem}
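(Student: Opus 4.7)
The plan is to translate $E$ by structural induction, producing for each subexpression $E'$ a machine $M_{E'}$ with designated initial and accepting states satisfying the invariant
\[
\tuple{q^{E'}_{\text{init}}, p, \mathcal{T}_\Lambda} \Rightarrow^*_{M_{E'}} \tuple{q^{E'}_{\text{acc}}, p', \mathcal{T}_{\Lambda'}}
\quad\Longleftrightarrow\quad
\tuple{p', \Lambda'} \in \sem{\tuple{E', w, p, \Lambda}},
\]
where $\mathcal{T}_\Lambda$ is a canonical encoding of the assignment $\Lambda$ on working tapes. I fix a global tape layout shared by every $M_{E'}$: two tapes per variable $x$ holding the positions $(a_x, b_x)$ with $\Lambda(x) = w[a_x..b_x)$, one scratch tape per variable for captures, and $O(1)$ counter tapes for backreferences and loops. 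Each tape carries only $O(\log |w|)$ bits, so $c = 1$ suffices, and since $E$ mentions at most $|E|$ variables the total tape count is $O(|E|)$. The shared layout is essential because oracle calls pass the caller's working tapes to the callee and, in the positive case, inherit them back.

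The structural cases without lookaheads follow Thompson-style constructions. The base cases $\sigma$ and $\epsilon$, alternation $E_1 + E_2$ (by nondeterministic choice), concatenation $E_1 E_2$ (by linking the accepting state of $M_{E_1}$ to the initial state of $M_{E_2}$), and Kleene star $E^*$ (by nondeterministic looping) are straightforward. For a capture $(E')_x$, I copy the current input-head position onto $x$'s scratch tape, run $M_{E'}$, and on acceptance write (scratch, current position) into the tapes for $x$. For a backreference $\var{x}$, I iterate a counter $i$ upward while moving the single input head back and forth between $a_x + i$ and $p + i$ to compare the symbols, then advance $p$ by $b_x - a_x$ on success. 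For $?(E')$ I recursively build $M_{E'}$ and insert an oracle transition $q_{\text{pre}} \xrightarrow{\in M_{E'}} q_{\text{post}}$: the oracle rule preserves the input position and installs the oracle's final tapes $\mathcal{U}$, exactly matching the positive-lookahead semantics. For $!(E')$ I use $q_{\text{pre}} \xrightarrow{\notin M_{E'}} q_{\text{post}}$, which preserves both position and tapes, matching the negative-lookahead semantics. Finally, the top-level machine consumes $\vdash$, initializes every variable tape to represent $\iota$ (e.g., $a_x = b_x = 0$), runs $M_E$, and accepts iff the head then sits on $\dashv$.

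Every lookahead in $E$ contributes one level of oracle nesting, so the final machine lies in $\oraclem^{\depth(E)}(1, O(|E|)) \subseteq \oraclem^{\omega}(1, O(|E|))$. Correctness is verified via the inductive invariant, following the lexicographic order on star-height and expression size used to define the semantics. The subtlest case is Kleene star: $\sem{\tuple{E^*, w, p, \Lambda}}$ is an infinite union, but the underlying configuration space is finite and nondeterministic looping covers every reachable pair, while Proposition~\ref{prop:finite-computation-tree} prunes divergent branches. I expect the main obstacle to be bookkeeping: guaranteeing that every nested oracle machine respects the shared tape layout without corrupting variables live in the caller, and that ancillary counter tapes used inside captures and backreferences do not collide with those used at surrounding oracle levels. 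Once these details are aligned, the construction yields $M \in \oraclem^{\omega}(1, O(|E|))$ with $L(M) = L(E)$.
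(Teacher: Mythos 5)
Your construction follows the paper's proof almost step for step: a Thompson-style translation of the REGEX operators, a pair of position-holding tapes per variable, a substring-comparison gadget for $\var{x}$, oracle transitions $\in T(E)$ and $\notin T(E)$ for positive and negative lookaheads respectively, and the same accounting yielding a machine in $\oraclem^{\depth(E)}(1, O(|E|))$ with $c = 1$.

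The one place where you diverge --- and where your version, as written, has a genuine gap --- is the treatment of a variable occurring inside its own capture. You defer writing the pair $(a_x, b_x)$ until $M_{E'}$ accepts, holding the start position on a single scratch tape per variable. That correctly keeps the \emph{old} binding of $x$ visible to backreferences $\var{x}$ inside $(E')_x$, matching the semantics; but it fails when the same variable is \emph{captured again} inside, e.g.\ $(a\,(b)_x)_x$: the inner capture overwrites the outer capture's saved start position on the shared scratch tape, so the outer capture ends up recording $w[p+1..p+2)$ instead of $w[p..p+2)$. You flag exactly this collision as ``bookkeeping to be aligned'' but never resolve it. The paper sidesteps the issue entirely by a preprocessing step you omit: an alpha-conversion-like variable renaming that rewrites every $(E)_x$ in which $x$ occurs into $?((E)_y)\,(\var{y})_x$ for a fresh $y$, applied innermost-first, after which no capture's variable occurs inside its own body and the start/end positions can simply be written in place. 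Alternatively, your scheme is repaired by allocating one scratch tape per capture \emph{occurrence} rather than per variable, which still keeps the tape count at $O(|E|)$. Either fix is routine, but one of them is required for the stated invariant to hold.
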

\begin{proof}
We inductively translate a given \rewbl{} $E$ to a $\oraclem^\omega$ $T(E)$.

As we will see below, the generated $T(E)$ has a unique source state $s$, which does not have incoming edges, and
a unique sink state $t$,  which does not have outgoing edges.
Please recall that $E$ accepts an input $w$ if it consumes all the input, i.e., if we have $\tuple{p, \Lambda} \in \sem{\tuple{E, w, 0, \iota}}$ with $p = |w|$.
So, for $M$, we add a checking transition rule to the states $s$ and $t$ of $T(E)$ as follows: $M=$\tikz[baseline=-3pt]{
\tikzset{largen/.style={fill=white, draw=black, rectangle, rounded corners=4pt, inner sep=6pt}}

\node[largen] (T1) {$T(E)$};
\node[state, fill=white, draw=black, minimum size=2pt, inner sep=2pt]
(s) [left = -.1cm of T1] {$s$};
\node[state, fill=white, draw=black, minimum size=2pt, inner sep=2pt]
(t) [right = -.1cm of T1] {$t$};

\node[state, initial, initial text={}, fill=white, draw=black, minimum size=5pt, inner sep=2pt] (init) [left = .7cm of s] {$i$};
\draw[->] (init) edge node[above]{\small $\vdash\,\mid+1$} (s);

\node[state, accepting, fill=white, draw=black, minimum size=5pt, inner sep=2pt] (u) [right = .7cm of t] {$f$};
\draw[->] (t) edge node[above]{\small $\dashv\ \mid\,0$} (u);
} where $i$ and $f$ are initial and accepting states of $M$.

\paragraph*{First Step: REGEX to Log-Space NTM}

First, we give a translation for the REGEX part of REWB as follows:
\[
\sigma \Mapsto 
\tikz[baseline=0pt]{
\tikzset{smalln/.style={state, minimum size=0pt, inner sep=1pt, fill=white, draw=black}}

\node[smalln] (s) {$s$};
\node[smalln] (t) [right = 1.4cm of s] {$t$};

\draw[->] (s) edge node[above] {\small $\sigma \mid +1$} (t);
},
\qquad
E_1 E_2 \Mapsto
\tikz[baseline=0pt]{
\tikzset{largen/.style={fill=white, draw=black, rectangle, rounded corners=4pt, inner sep=6pt}}

\node[largen] (T1) {$T(E_1)$};
\node[state, fill=white, draw=black, minimum size=0pt, inner sep=0pt]
(s1) [left = -.1cm of T1] {$s_1$};
\node[state, fill=white, draw=black, minimum size=0pt, inner sep=0pt]
(t1) [right = -.1cm of T1] {$t_1$};

\node[largen] (T2) [right = .8cm of t1] {$T(e_2)$};
\node[state, fill=white, draw=black, minimum size=0pt, inner sep=0pt]
(s2) [left = -.1cm of T2] {$s_2$};
\node[state, fill=white, draw=black, minimum size=0pt, inner sep=0pt]
(t2) [right = -.1cm of T2] {$t_2$};

\draw[->] (t1) edge[above] node{\small nop} (s2);
}\ ,
\]
\[
(E)^* \Mapsto \tikz[baseline=0pt]{
\tikzset{largen/.style={fill=white, draw=black, rectangle, rounded corners=4pt, inner sep=6pt}}

\node[largen] (T1) {$T(E)$};
\node[state, fill=white, draw=black, minimum size=0pt, inner sep=.2pt]
(s1) [left = -.1cm of T1] {$\,s'$};
\node[state, fill=white, draw=black, minimum size=0pt, inner sep=.2pt]
(t1) [right = -.1cm of T1] {$\,t'$};

\node[state, fill=white, draw=black, minimum size=0pt, inner sep=.5pt]
 (u) [left = .5cm of s1] {$\,s\,$};
\draw[->] (u) edge[above] node{\small nop} (s1);
\node[state, fill=white, draw=black, minimum size=0pt, inner sep=.5pt]
 (v) [right = .5cm of t1] {$\,t\,$};
 \draw[->] (t1) edge[above] node{\small nop} (v);
 
\draw[->, thick]  (t1) -- ++(0, .7cm) -| (s1) node[above,pos=0.25] {nop};
}\ ,
\qquad
E_1 + E_2 \Mapsto
 \tikz[baseline=0pt]{
 \tikzset{largen/.style={fill=white, draw=black, rectangle, rounded corners=4pt, inner sep=6pt}}
\tikzset{smalln/.style={state, minimum size=0pt, inner sep=1pt, fill=white, draw=black}}

 \node[smalln] (s) {$s$};
 \node[smalln] (t) [right = 2.5cm of s] {$t$};
 
\node[largen] (T1) [above = 0cm of s, xshift=1.4cm] {$T(E_1)$};
\node[state, fill=white, draw=black, minimum size=0pt, inner sep=0pt]
(s1) [left = -.1cm of T1] {$s_1$};
\node[state, fill=white, draw=black, minimum size=0pt, inner sep=0pt]
(t1) [right = -.1cm of T1] {$t_1$};

\node[largen] (T2) [below = 0cm of s, xshift=1.4cm] {$T(E_2)$};
\node[state, fill=white, draw=black, minimum size=0pt, inner sep=0pt]
(s2) [left = -.1cm of T2] {$s_2$};
\node[state, fill=white, draw=black, minimum size=0pt, inner sep=0pt]
(t2) [right = -.1cm of T2] {$t_2$};
\draw[->] (s) edge[above] node[xshift=-10pt]{\small nop} (s1);
\draw[->] (s) edge[above] node[below, xshift=-10pt]{\small nop} (s2);
\draw[->] (t1) edge[auto] node[auto]{\small nop} (t);
\draw[->] (t2) edge[auto] node[below, xshift=10pt]{\small nop} (t);
}\ ,
\]
where the edges labelled with ``nop'' are the same ones used in Example~\ref{log-space-example1}, which just change states.

This translation is identical to the McNaughton--Yamada--Thompson algorithm, which is well-known and found in textbooks of automata theory.
Now, $L(E) = L(M)$ clearly holds.

\paragraph*{Second Step: Variable renaming and Translating $(E)_x$ and $\var{x}$}

Before translating $(E)_x$ and $\var{x}$, to simplify our translation,
we perform \emph{variable renaming} like alpha-conversion in lambda calculus to remove patterns such that $(\cdots x \cdots)_x$
where a variable $x$ appears inside an expression capturing $x$.
Formally, if a variable $x$ appears in $E$ of a capturing expression $(E)_x$,
then we change $(E)_x$ to $?((E)_y) (\var{y})_x$ using a fresh variable $y$ without modifying $E$.

We perform this variable renaming from the innermost to the outermost. For example,
\begin{itemize}
\item an expression $(aa)_x (\var{x} \var{x})_x$ is translated to $(aa)_x ?((\var{x} \var{x})_y) (\var{y})_x$.
\item an expression $(a)_x (b)_y (?[(\var{x} \var{x})_x] aa \var{y})_x$ is first translated to
$(a)_x (b)_y (?[?((\var{x} \var{x})_\alpha) (\var{\alpha})_x] aa \var{y})_x$ and then to
$(a)_x (b)_y ?((?[?((\var{x} \var{x})_\alpha) (\var{\alpha})_x] aa \var{y})_\beta) (\var{\beta})_x$.
\end{itemize}

After variable renaming, we focus on the part $(E)_x$ of REWB:
\[
(E)_x \Mapsto
\tikz[baseline=0pt]{
\tikzset{largen/.style={fill=white, draw=black, rectangle, rounded corners=4pt, inner sep=6pt}}
\tikzset{smalln/.style={state, minimum size=0pt, inner sep=1pt, fill=white, draw=black}}

\node[largen] (T1) {$T(E)$};
\node[smalln] (s1) [left = -.1cm of T1] {$\,s'$};
\node[smalln] (t1) [right = -.1cm of T1] {$\,t'$};

\node[smalln] (u) [left = 4.4cm of s1] {$\,s\,$};
\draw[middle dotted line] (u) -- node[above]{\small \begin{tabular}{c}save the current head position \\ in binary form to the tape $T_{x_l}$\end{tabular}} (s1);

\node[smalln] (v) [right = 4.4cm of t1] {$\,t\,$};
\draw[middle dotted line] (t1) -- node[above]{\small \begin{tabular}{c}save the current head position \\ in binary form to the tape $T_{x_r}$\end{tabular}} (v);
}\,.
\]
In order to keep the start position of the variable $x$,
we first copy the current head position to the special working tape $T_{x_l}$ in binary form.
Then, we execute the expression $E$ by running from the state $s$ to $t$.
Finally, we record the new head position into the working tape $T_{x_r}$.
Now, $w[x_l\,..\,x_r) = w[x_l] w[x_l + 1] \cdots w[x_r -1]$ is a substring matched with the expression $E$
where $x_l$ and $x_r$ are the numbers corresponding to the contents of $T_{x_l}$ and $T_{x_r}$.

Next, we focus on the part $\var{x}$ of REWB:
\[
\var{x} \Mapsto
\tikz[baseline=0pt]{
\tikzset{smalln/.style={state, minimum size=0pt, inner sep=1pt, fill=white, draw=black}}

\node[smalln] (s)  {$\,s\,$};
\node[smalln] (t) [right = 9cm of s] {$\,t\,$};

 \draw[middle dotted line] (s) -- node[above]{\small check if the current head starts with the substring of $w[x_l\,..\,x_r)$} (t);
}\,.
\]
As we have seen above, the substring $w[x_l\,..\,x_r)$ denotes the value of the variable $x$.
This checking task is accomplished using an extra tape $T_\text{tmp}$ without changing $T_{x_l}$ and $T_{x_r}$.

On these translations, it holds that $L(E) = L(M)$ for any REWB expressions $E$.

\paragraph*{Third Step: \rewbl{} to $\oraclem^{\omega}$}\label{subsec:rewbl-to-logntm}

For our construction, we need that lookahead are augmented with a continuation $K$, for instance $?(E)K$ and $!(E) K$.
If a given expression does not satisfy this property, we add the expression $\epsilon$, which always succeeds.
For example, $(E_1 E_2 !(E_3))^* E_4 \Mapsto (E_1 E_2 (!(E_3) \epsilon))^* E_4$.
Now, we can easily translate $?(E)K$ and $!(E)K$ using oracle transition rules as follows:
\[
?(E)K \Mapsto
\tikz[baseline=0pt]{
\tikzset{largen/.style={fill=white, draw=black, rectangle, rounded corners=4pt, inner sep=6pt}}
\tikzset{smalln/.style={state, minimum size=0pt, inner sep=1pt, fill=white, draw=black}}

\node[largen] (T1) {$T(K)$};
\node[smalln] (s1) [left = -.1cm of T1] {$\,s'$};
\node[smalln] (t1) [right = -.1cm of T1] {$\,t'$};

\node[smalln] (u) [left = 2 cm of s1] {$\,s\,$};
\draw[->] (u) -- node[above]{\small $\in T(E)$} (s1);
},\quad
!(E)K \Mapsto
\tikz[baseline=0pt]{
\tikzset{largen/.style={fill=white, draw=black, rectangle, rounded corners=4pt, inner sep=6pt}}
\tikzset{smalln/.style={state, minimum size=0pt, inner sep=1pt, fill=white, draw=black}}

\node[largen] (T1) {$T(K)$};
\node[smalln] (s1) [left = -.1cm of T1] {$\,s'$};
\node[smalln] (t1) [right = -.1cm of T1] {$\,t'$};

\node[smalln] (u) [left = 2 cm of s1] {$\,s\,$};
\draw[->] (u) -- node[above]{\small $\notin T(E)$} (s1);
}.
\]

By the semantics of our nested oracle machines, it is clear that  $L(E) = L(M)$.
\end{proof}

Now, Theorem~\ref{thm:simulate-2mfa}, \ref{thm:collapse}, and~\ref{rewbl-to-oracles}, and Theorem~\ref{thm:tqbf-and-rewbl}~and~\ref{thm:membership-oracle-machines} imply our main results.
\begin{corollary}
$\rewbllang = \nlog$.
\end{corollary}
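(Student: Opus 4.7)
The corollary asserts both inclusions $\nlog \subseteq \rewbllang$ and $\rewbllang \subseteq \nlog$, and my plan is to derive each by stringing together the machinery already established in the paper. Since Theorem~\ref{thm:simulate-2mfa} directly gives $\nlog \subseteq \rewbllang$ via the simulation of two-way multihead automata, no further work is needed for that direction; I would simply cite it.

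For the reverse inclusion $\rewbllang \subseteq \nlog$, the plan is to compose the two translations built in the paper. First, given an arbitrary \rewbl{} expression $E$, Theorem~\ref{rewbl-to-oracles} produces a log-space nested-oracles NTM $M \in \oraclem^{\omega}(1, O(|E|))$ with $L(M) = L(E)$. Next, since $E$ has finite size, the nesting depth of $M$ is bounded: writing $n = \depth(M) \leq |E|$, we have in fact $M \in \oraclem^{n}(1, O(|E|))$. Applying Theorem~\ref{thm:collapse} to this $M$ yields a plain log-space NTM $N \in \nlog(O(1 \cdot k^{n}), O(k \cdot n))$ with $L(N) = L(M) = L(E)$, where $k = O(|E|)$. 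Although the tape bound $O(k^{n})$ is exponential in the \emph{size} of the expression $E$, this constant depends only on $E$ (which is fixed once we fix a language in $\rewbllang$) and not on the input word $w$. So $N$ is a genuine log-space NTM accepting $L(E)$, witnessing $L(E) \in \nlog$.

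Putting both inclusions together gives $\rewbllang = \nlog$. The only subtlety worth flagging is precisely the point in the previous paragraph: for the corollary as stated (an equality of \emph{language classes}), one must remember that the bounds in Theorem~\ref{thm:collapse} are allowed to depend on the expression, because membership in $\nlog$ is a per-language property. If instead one wanted a uniform bound covering the membership problem (where $E$ is part of the input), this collapse-based argument would blow up exponentially, which is exactly why the paper's separate Theorem~\ref{thm:membership-oracle-machines} is needed and why the complexity of the membership problem lands at \pspace{} rather than inside \nlog. Hence the corollary follows immediately from Theorems~\ref{thm:simulate-2mfa}, \ref{thm:collapse}, and~\ref{rewbl-to-oracles}, with no further obstacle.
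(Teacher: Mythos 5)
Your proposal is correct and follows exactly the paper's route: Theorem~\ref{thm:simulate-2mfa} for $\nlog \subseteq \rewbllang$, and the composition of Theorem~\ref{rewbl-to-oracles} with Theorem~\ref{thm:collapse} for the converse. Your remark that the exponential blow-up in the collapse is harmless because the bounds may depend on the fixed expression (but is exactly why the membership problem needs the separate \pspace{} argument) is a correct and welcome clarification of a point the paper leaves implicit.
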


\begin{corollary}
The membership problem of \rewbl{} is \textbf{PSPACE}-complete.
\end{corollary}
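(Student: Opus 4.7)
The plan is to obtain \pspace-completeness by assembling two halves that are already essentially available. Hardness is immediate from Theorem~\ref{thm:tqbf-and-rewbl}, which encodes TQBF (in CNF) into the membership problem of \rewbl{} by the structural translation $Q \mapsto E_Q$ together with the canonical witness word $TFTF\cdots TF\,TT\cdots T$. Since TQBF restricted to CNF is already \pspace-complete, this gives the lower bound with essentially no further work.

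For the upper bound, given an input \rewbl{} expression $E$ and a word $w$, the idea is to first translate $E$ into a log-space nested-oracles NTM and then run the generic decision procedure for $\oraclem^{\omega}$. Concretely, I would apply Theorem~\ref{rewbl-to-oracles} to obtain in polynomial time a machine $M \in \oraclem^{\omega}(1, O(|E|))$ with $L(M) = L(E)$; note that this translation is purely syntactic (McNaughton--Yamada--Thompson-style gadgets for regex, tapes $T_{x_l}, T_{x_r}, T_{\text{tmp}}$ for captures and backreferences, and a single oracle edge $\in T(F)$ or $\notin T(F)$ for each lookahead), so $|M|$ is polynomial in $|E|$ and the depth of oracle nesting is bounded by the lookahead-nesting depth of $E$. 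Then Theorem~\ref{thm:membership-oracle-machines} decides $w \in_? L(M)$ in space polynomial in $c$, $k$, $|w|$, and $|M|$.

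The subtle point, and what I would expect to be the main obstacle if I were not armed with Theorem~\ref{rewbl-to-oracles}, is exactly the caveat spelled out in the remark following Theorem~\ref{thm:membership-oracle-machines}: the \pspace{} bound for the interpreter-style procedure is polynomial in $c$ but only polynomial in a \emph{unary} encoding of $c$. A naive iterated application of Theorem~\ref{thm:collapse} would blow $c$ up to $O(c \cdot k^{|M|})$, which is fatal. The key reason the argument goes through is that Theorem~\ref{rewbl-to-oracles} produces $M$ with $c = 1$; hence the stack frames in the interpreter have size $O(k \log |w|) = O(|E| \log |w|)$, the call-stack depth is bounded by $\depth(M) \le |M| = O(|E|)$, and the total space is polynomial in $|E|$ and $|w|$.

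Putting the two halves together yields \pspace-completeness. In short, the proof is just the composition
\[
\text{input } (E, w) \ \xrightarrow{\text{Thm.~\ref{rewbl-to-oracles}}}\ (M, w) \ \xrightarrow{\text{Thm.~\ref{thm:membership-oracle-machines}}}\ \text{accept/reject in \pspace},
\]
combined with the \pspace-hardness of Theorem~\ref{thm:tqbf-and-rewbl}, and the only quantitative thing to verify is the $c = 1$ bookkeeping already guaranteed by the construction in Theorem~\ref{rewbl-to-oracles}.
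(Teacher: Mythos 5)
Your proposal is correct and follows essentially the same route as the paper: \pspace-hardness from Theorem~\ref{thm:tqbf-and-rewbl}, and the upper bound by composing Theorem~\ref{rewbl-to-oracles} with Theorem~\ref{thm:membership-oracle-machines}, with the crucial observation that the translation yields $c=1$ (exactly the point flagged in the paper's remark after Theorem~\ref{thm:membership-oracle-machines}). Nothing is missing.
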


\section{Future Work and Conclusion}\label{sec:conclusion}

We have shown that the language class \rewbl{}---regular expressions plus backreferences and lookaheads (without any restrictions)---captures the class \nlog.
Our result closes the expressiveness about \rewbl{}.
Furthermore, we have shown that the membership problem of \rewbl{} is \pspace-complete.
On the other hand, it remains unclear whether \prewb{} and \nrewb{} are proper subclasses of \rewbl{}.
For example, we conjecture that \prewb{} cannot recognize the language $L_{\text{prime}} = \set{ a^ n : \text{$n$ is a prime number} }$.\footnote{
\rewbl{} recognizes $L_{\text{prime}}$ as follows:
(1) we define $E_{\text{composite}} = (aa\,a^*)_w \var{w} \var{w}^* \$$, which recognizes composite numbers;
(2) then, we define $E_{\text{non-prime}} = a\$ + E_{\text{composite}}$, which recognizes non-prime numbers;
(3) finally, we define $E_{\text{prime}} = !(E_{\text{non-prime}})\ a^*\ \$$.
Readers interested in conducting a prime test with $E_{\text{prime}}$ can try the following expression in C\#:
$\texttt{@"\textasciicircum(?!((a\$)|((?<W>(aa+))(\textbackslash k<W>+)\$))).*\$"}$.
}
It is known that the language can be represented by \nrewb{}.
Also, we conjecture that some \nlog-languages cannot be recognized by \nrewb{}.

\section*{Acknowledgement}
I thank anonymous reviewers for their detailed comments on a previous version of this paper.
I also sincerely thank anonymous reviewers for their careful reading and invaluable comments on the current version.
All comments helped me to significantly improve the presentation of this paper and the clarity of proofs and constructions.
This work was supported by JST, CREST Grant Number JPMJCR21M3.

\bibliographystyle{plainurl}
\bibliography{mybib}

\appendix

\section{Proof of Theorem~\ref{thm:unary-non-indexed}: Non Indexed Languages Accepted by \prewb{} and \nrewb{}}

We said that the language $L(E)$ grows faster than the language $L_{\mathit{2exp}} = \set{ a^{2^{2^n}} : n \in \mathbb{N} }$ in Section~\ref{subsec:unary-non-indexed-languages}.
To complete the proof sketch for the \nrewb{} part, we introduce the notion of faster growing.
To formally state the notion, we first introduce the (growth) order function $\mathcal{G}_L : \mathbb{N} \to \mathbb{N}$ for unary languages $L$ as follows:
\[
\mathcal{G}_L(n) = \text{the length of $n$-th shortest word in $L$}.
\]
For example,
$\mathcal{G}_{L_{\mathit{1exp}}}(n) = 2^n$ where $L_{\mathit{1exp}} = \set{ a^{2^n} : n \in \mathbb{N} }$,
$\mathcal{G}_{L_{\mathit{2exp}}}(n) = 2^{2^n}$, and
$\mathcal{G}_{L_{\text{2tower}}}(n) =  2\!\underbrace{{}^{2^{\rddots2}}}_n$.
Using $\mathcal{G}$, for two unary languages $L_1$ and $L_2$, we say $L_1$ \emph{grows faster} than $L_2$ if $\mathcal{G}_{L_1}(n) = \Omega(\mathcal{G}_{L_2}(n))$.

We recall an important result about indexed languages: i.e., for any indexed language $I$, its growth order is $O(2^n)$. In other words, $\mathcal{G}_I(n) = O(2^n)$.
This result was shown by pumping or shrinking arguments on indexed languages~\cite{Hayashi:1973,Gilman:1996}.
Since the growth rate of $L(E'_{\mathit{2exp}})$ surpasses the order $2^n$ clearly, \nrewb{} can represent a non-indexed language.
\begin{lemma}\label{lem:fast-growing-neglang}
\nrewb{} recognizes a language that grows faster than $L_{\text{2tower}}$.
\end{lemma}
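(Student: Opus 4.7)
The plan is to exhibit a single \nrewb{} expression $E^{\star}$ over $\set{a}$ whose language satisfies $\mathcal{G}_{L(E^{\star})}(n) = \Omega(T_n)$, generalising the construction of $E'_{\mathit{2exp}}$ from Section~\ref{subsec:unary-non-indexed-languages} by iterating a single ``tower-up'' fragment
\[
T \;=\; (\var{m})_p\;\bigl((\var{n} a)_n (\var{m}\var{m})_m\bigr)^{*}
\]
inside an outer Kleene-$*$. In $T$, the capture $(\var{m})_p$ consumes $|m|$ characters and stashes $|m|$ into $p$; the inner Kleene-$*$ performs $k$ destructive doublings of $m$ while advancing $|n|$ by one per iteration. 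Provided the nondeterministic $k$ is forced to equal $|p|$, one application of $T$ transforms $|m|=\ell$ into $|m|=\ell\cdot 2^{\ell}$, climbing exactly one tower level with total character consumption $\Theta(\ell\cdot 2^{\ell})$. Initialising $|m|=1$ via $(a)_m$ and iterating $T$ an arbitrary $h$ times, the candidate $E^{\star} = (a)_m\,T^{*}\,\$$ accepts, for every $h \ge 0$, a word of length $\Theta(T_{h+1})$, which would give the required tower growth.

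The key step is to enforce the per-iteration exactness $k=|p|$, since the only non-destructive assertion tool in \nrewb{} is the double-negation lookahead $!(!(\cdot))$ that encodes equalities of the form ``the remaining input length from the current position equals $|X|$''. With a local snapshot $(\var{n})_q$ inserted at the top of $T$, placing two coincident lookaheads $!(!(\var{n}\$))$ and $!(!(\var{q}\var{p}\$))$ at the end of the inner loop pins $|n|-|q|=|p|$, that is, $k=|p|$. However, any such local terminal assertion commits the remaining input length at its position, leaving no characters for any further outer iteration. I plan to sidestep this by deferring the check to a single aggregate assertion at the very end of $E^{\star}$: inside each application of $T$, append $\var{n}$ and $\var{p}$ to two running aggregate variables $V$ and $U$ via $(\var{V}\var{n})_V$ and $(\var{U}\var{p})_U$, and at the end of $E^{\star}$ impose the coincident lookaheads $!(!(\var{V}\$))$ and $!(!(\var{U}\$))$, forcing $|V|=|U|$, i.e.\ $\sum_i|n_i|=\sum_i|p_i|$.

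The main obstacle is that this aggregate sum equality is strictly weaker than the termwise equalities $k_i=|p_i|$ I actually need, since distinct per-iteration choices may yield the same total. My plan to close this gap is to augment each application of $T$ with an inline bookkeeping check that forbids the inner loop from terminating before $|n_i|$ has reached $|q_i|+|p_i|$, thereby forcing $|n_i|\ge|p_i|$ at every level; combined with the terminal $|V|=|U|$ this pinches every $|n_i|$ to exactly $|p_i|$. Once this is verified, a routine counting argument shows that any accepted word of $E^{\star}$ arising from exactly $h$ outer iterations has length $\Theta(T_{h+1})$, and consequently $\mathcal{G}_{L(E^{\star})}(n) = \Omega(T_n)$, establishing that $L(E^{\star})$ grows faster than $L_{\text{2tower}}$.
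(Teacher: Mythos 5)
Your plan is considerably more ambitious than what the paper itself records for this lemma: the only witness exhibited there is $E'_{\mathit{2exp}}$, whose language is $\set{a^{f(k)} : k \in \mathbb{N}}$ with $f(k) = \Theta(2^{2^k})$, and the surrounding text only verifies that this exceeds the $O(2^n)$ growth bound for indexed languages; no iterated, genuinely tower-growing expression is constructed. So you are right that reaching $\Omega(\mathcal{G}_{L_{\text{2tower}}}(n))$ requires wrapping a ``tower-up'' gadget in an outer Kleene-$*$ and forcing each inner loop to run the correct number of times. Unfortunately, that forcing step is exactly where your argument does not close, in two respects.

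First, the ``inline bookkeeping check that forbids the inner loop from terminating before $|n_i|$ has reached $|q_i|+|p_i|$'' is never constructed, and it is precisely the kind of local, cross-variable length comparison that your own preliminary analysis rules out: the only equality primitive available in \nrewb{} is the $\$$-anchored pattern $!(!(X\$))$, which compares $|X|$ against the \emph{remaining input length} and therefore, as you observe, cannot sit inside an unbounded outer loop; without $\$$, a backreference $\var{X}$ can only \emph{consume} $|X|$ characters, which again measures against the input rather than against another variable. So the key per-iteration constraint $k_i \ge |p_i|$ has no implementation in the proposal. Second, even granting it, the terminal pinching is arithmetically wrong as written: with $V$ accumulating $\var{n}$ and $U$ accumulating $\var{p}$, the constraint $|V|=|U|$ reads $\sum_i |n_i| = \sum_i |p_i|$, while $|n_i| \ge |q_i| + |p_i|$ and $|q_i| = |n_{i-1}| \ge 1$ for every $i \ge 2$; hence equality of the two sums is \emph{impossible} once there are two or more outer iterations, and $E^{\star}$ would accept nothing beyond the first level. (Accumulating $\var{q}\var{p}$ into $U$ instead of $\var{p}$ would repair the pinch, but only modulo the unimplemented per-iteration inequality.) Until that local lower bound is realized by some consumption-accounting device---in the spirit of the halving trick used elsewhere in the appendix---the accepted words are not confined to tower lengths, the language need not be sparse, and the claimed bound on $\mathcal{G}_{L(E^{\star})}$ does not follow.
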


Next, let us consider the \prewb{} part.
Combining the construction the example of Section~\ref{subsec:unary-non-indexed-languages} and a technique, which we call \emph{halving},
we can give a \prewb{} expression representing $L_{\mathit{2exp}}$.
\begin{lemma}
\prewb{} can recognize the doubly exponential language $L_{\mathit{2exp}} = \set{ a^{2^{2^n}} : n \in \mathbb{N} }$.
\end{lemma}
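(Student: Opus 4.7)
The plan is to modify the \rewbl{} expression $E_{\mathit{2exp}}$ from Section~\ref{subsec:unary-non-indexed-languages} so that it no longer uses its single negative-lookahead ingredient, the subexpression $?(a^*\,?(\var{m}\$)\,?(\var{y}\$))$ with $\$ = !(\Sigma)$. Everything else in $E_{\mathit{2exp}}$---the positive-lookahead doubling loops that build $n = a^k$, $m = a^{2^k}$, $y = a^j$, $x = a^{2^j}$ without consuming any input, and the final $\var{x}$ that consumes exactly $|x|$ characters---already lies within \prewb{}. So the task reduces to replacing the negative-lookahead equality test $|m| = |y|$ (which is what ultimately forces $j = 2^k$, and hence $|x| = 2^{2^k}$) by a purely positive construction.

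The \emph{halving} technique supplies this replacement. The only ``exact-length'' oracle available in \prewb{} is the top-level full-consumption requirement in $L(E) = \set{w : \tuple{|w|, \Lambda} \in \sem{\tuple{E, w, 0, \iota}}}$. I would introduce one or more auxiliary captures $(a^*)_H$ placed so that the backreference $\var{H}$ together with $\var{m}$ and $\var{y}$ appears in a symmetric, balanced pattern inside positive lookaheads; the nondeterministic choice of $|H|$ combined with the global consumption equation for $|w|$ then has a solution if and only if $|m| = |y|$. In effect, halving a window of the input via an auxiliary capture $(a^*)_H \ldots \var{H}$ plays the same anchoring role as $\$$ does in $E_{\mathit{2exp}}$: instead of pinning a backreference to the end of the input, we pin the midpoint of a chosen window, and a second occurrence of $\var{H}$ forces the window to have exactly the length its halving claims.

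The main obstacle is that direct length equality of two backreferences is not expressible by any local positive-lookahead construct in \prewb{} over a unary alphabet, since a positive lookahead only asserts ``$E$ fits from here'', an upper bound rather than an exact length. Halving converts this equality into a global linear length equation that the top-level semantics enforces automatically, and the technical step is to choose the layout of the auxiliary captures so that the equation admits only the intended solution $|m| = |y|$ (and excludes spurious shorter or longer assignments of $|H|$). Once this is in place, the resulting \prewb{} expression accepts a word $w$ iff $|w| = 2^{2^n}$ for some $n$, mirroring the analysis of $E_{\mathit{2exp}}$ and yielding $L_{\mathit{2exp}} \in \textbf{REWB}(+)$.
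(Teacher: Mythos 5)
You have correctly isolated the obstacle (over a unary alphabet, a positive lookahead $?(\var{m})$ only asserts that $\var{m}$ \emph{fits} in the remaining suffix, i.e.\ an inequality on lengths, never an exact equality test) and you name the right remedy: the halving trick anchored to the top-level full-consumption requirement. That is indeed the paper's strategy. However, your proposal stops exactly where the proof has to begin. The sentence about introducing ``one or more auxiliary captures $(a^*)_H$ placed so that \dots the equation admits only the intended solution'' is the entire content of the lemma, and the mechanism you gesture at does not work as described: pinning ``the midpoint of a chosen window'' with a guessed capture $H$ and a second occurrence of $\var{H}$ again yields only an inequality ($2|H|$ is at most the length of the available suffix), because \prewb{} has no way to delimit the right edge of an internal window. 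The one and only exact anchor available is the end of the entire input, and it can be spent once.

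The paper's construction spends that single anchor as follows. The doubling loops are modified to also build $m_{1/2}$ and $x_{1/2}$ with $|m_{1/2}| = |m|/2$ and $|x_{1/2}| = |x|/2$ \emph{deterministically} (capture the old value before each doubling), so no nondeterministic guess of a half needs to be verified. The expression then ends with the consuming tail $\var{y}\ ?(\var{m})\ ?(\var{x})\ \var{m_{1/2}}\,\var{x_{1/2}}$. The two lookaheads force the suffix remaining after $\var{y}$ to have length at least $\max(|m|,|x|)$, while full consumption forces that length to equal $|m_{1/2}|+|x_{1/2}| = (|m|+|x|)/2$; since the average is at most the maximum, with equality iff $|m|=|x|$, acceptance occurs exactly when $|m|=|x|$. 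Note that this tail necessarily consumes $\var{m_{1/2}}\var{x_{1/2}}$ in addition to $\var{y}$, so the construction actually yields $\set{a^{2^{2^n}+2^n} : n \in \mathbb{N}}$; your claim of obtaining exactly $|w|=2^{2^n}$ with $\var{x}$ as the sole consuming part cannot hold as stated, because then the global length equation would not involve $|m|$ or $|y|$ at all. You would need either to accept the perturbed (still doubly exponential, still non-indexed) language, or to supply an additional argument that you have not sketched.
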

\begin{proof}
We proceed as the example of Section~\ref{subsec:unary-non-indexed-languages}; therefore,
\begin{itemize}
\item By $E_1 = ?(a)_m ?((\var{n}a)_n (\var{m}\var{m})_m)^*$, we make $n$ and $m = 2^n$ nondeterministically.
\item Also, by $E_2 = ?(a)_y ?((\var{x}a)_x (\var{y}\var{y})_y)^*$, we make $x$ and $y = 2^x$ nondeterministically.
\end{itemize}
Then, we need to check $x =m$; however, due to the absence of negative lookaheads, especially $\$$,
we cannot do $?(\Sigma^* ?(\var{m} \$) ?(\var{x} \$))$ to check it.

Instead of using $\$$, we use another technique, which heavily depends on the acceptance condition of \rewbl{}---we need to consume all the inputs.

As idea, we build $m_{1/2}$ (resp. $x_{1/2}$) that contains the half $a$'s of $m$ (resp. $x$).

Then, $E = E_1 E_2 \var{y} ?(\var{m}) ?(\var{x}) \var{m_{1/2}} \var{x_{1/2}}$ accepts $w$ iff $m = x$;
therefore, $L(E) = \set{ a^{2^{2^n}} a^{2^n} : n \in \mathbb{N} }$.

To show that $E$ accepts $w$ iff $m = x$,
we show (1) $m = x \implies w \in L(E)$; and (2) $m \neq x \implies w \notin L(E)$.

In the case $m = x$, it suffices to showing $\var{m} = (\var{m_{1/2}} \var{x_{1/2}})$. It is clear.

In the case $m < x$, $(\var{m_{1/2}} \var{x_{1/2}}) < \var{x}$.

Therefore, if we reach the point $\var{m_{1/2}} \var{x_{1/2}}$, we cannot consume the rest part since $\var{y} \var{x}$ is a prefix of $w$
($\var{y} \var{m_{1/2}} \var{x_{1/2}} \preceq \var{y} \var{x} \preceq w$).

In the case $m > x$, we cannot consume all the input $w$; hence, $w \notin L(E)$.
\end{proof}

Combining these lemmas, we obtain the following theorem.
\begin{customthm}{3}
\prewb{} and \nrewb{} can represent non indexed languages.
\end{customthm}

\section{Proof of Theorem~\ref{thm:emptiness-of-unary}: Undecidability of Emptiness Problems of Unary \prewb{} and Unary \nrewb{}}

Here we consider unary \prewb{} and unary \nrewb{} whose input alphabet is a single set $\Sigma = \set{a}$.

\begin{lemma}
The emptiness problem of unary \nrewb{} is undecidable.
\end{lemma}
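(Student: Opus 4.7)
The plan is to reduce the Post Correspondence Problem (PCP) over the alphabet $\set{0, 1}$ to the emptiness problem of unary \nrewb{}. Given an instance $P = (u_1, v_1), \ldots, (u_k, v_k)$ of PCP, I will computably construct a unary \nrewb{} expression $E_P$ over $\Sigma = \set{a}$ such that $L(E_P) \neq \emptyset$ iff $P$ has a solution; the undecidability of PCP then transfers to the emptiness problem.

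The key gadget is to represent binary strings as unary numerals. Define an injective map $\nu : \set{0,1}^* \to \mathbb{N}$ by $\nu(s) = \sum_{i=1}^{|s|} \delta(s_i) \cdot 3^{|s|-i}$ with $\delta(0) = 1,\ \delta(1) = 2$; then $\nu$ is injective and satisfies $\nu(st) = \nu(s) \cdot 3^{|t|} + \nu(t)$. A natural number $n$ is stored as the word $a^n$ inside a variable. The expression $E_P$ maintains two variables $U, V$ that, after nondeterministically guessing a prefix $(i_1, \ldots, i_j)$ of a candidate PCP solution, hold $a^{\nu(u_{i_1}\cdots u_{i_j})}$ and $a^{\nu(v_{i_1}\cdots v_{i_j})}$ respectively. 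For each $i \in \set{1, \ldots, k}$ set the constants $c_i^u = 3^{|u_i|}$, $d_i^u = \nu(u_i)$ (and symmetrically $c_i^v, d_i^v$), and take the update block $E_i = (\var{U}^{c_i^u}\, a^{d_i^u})_U\,(\var{V}^{c_i^v}\, a^{d_i^v})_V$, where $\var{U}^c$ abbreviates the $c$-fold sequencing $\var{U}\var{U}\cdots\var{U}$. Because a capture $(E)_x$ assigns to $x$ the substring matched by $E$ under the \emph{old} assignment, executing $E_i$ on the unary input performs the arithmetic updates $U \mapsto c_i^u \cdot U + d_i^u$ and $V \mapsto c_i^v \cdot V + d_i^v$ while consuming precisely that many input characters.

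The full reduction is $E_P = (E_1 + \cdots + E_k)(E_1 + \cdots + E_k)^*\; !(!(\var{U}\,\$))\; !(!(\var{V}\,\$))\; \var{U}$, with $\$ = !(a)$ and the double-negation $!(!(\cdot))$ being the non-destructive positive lookahead already used in the sketch of Theorem~\ref{thm:unary-non-indexed}; the explicit $(\cdots)(\cdots)^*$ form rules out the degenerate zero-iteration branch in which $U = V = \epsilon$ would trivially accept $w = \epsilon$. The two double-negative tests assert, without consuming any character, that $|U| = |V| = R$, where $R$ is the remaining input length, after which the trailing $\var{U}$ consumes exactly those $R$ characters. So any accepting run on $w = a^N$ witnesses $\nu(u_{i_1}\cdots u_{i_n}) = \nu(v_{i_1}\cdots v_{i_n})$, and by injectivity of $\nu$ the chosen sequence is a genuine PCP solution; conversely, from any PCP solution the total consumption $K$ of the corresponding sequence of update blocks is determined, and $w = a^{K + \nu(u_{i_1}\cdots u_{i_n})}$ lies in $L(E_P)$. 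Hence $L(E_P) \neq \emptyset$ iff $P$ admits a solution.

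The hardest part will be the bookkeeping forced by the fact that in \nrewb{} every variable update actually consumes input. I would need to verify carefully that the total consumption $K$ is a well-defined function of the nondeterministic index choices, so that a witness $w = a^N$ of the correct length always exists, and that the pair of $!(!(\cdot))$ tests really does collapse to a pure length assertion on the post-loop values of $U$ and $V$, leaving those variables untouched so that the trailing $\var{U}$ draws on the correct encoding.
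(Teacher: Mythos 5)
Your proof is correct, but it takes a genuinely different route from the paper. The paper reduces from Hilbert's tenth problem: it nondeterministically builds unary representations of $x$, $y$, $x^2$, $x^3$, $xy$ via iterated doubling-style captures with auxiliary ``witness'' counters, and then uses several $!(!(\cdot\,\$))$ tests against the remaining input to enforce both the arithmetic consistency of those auxiliary variables and the equation itself. You instead reduce from PCP, encoding $\set{0,1}$-strings injectively as base-$3$ numerals (digits $1,2$, so no padding ambiguity), and implement the affine update $U \mapsto c\cdot U + d$ in a single capture $(\var{U}^{c}a^{d})_U$; the homomorphism identity $\nu(st)=\nu(s)\cdot 3^{|t|}+\nu(t)$ then makes each block exactly one tile application, and a single pair of $!(!(\var{U}\$))\,!(!(\var{V}\$))$ tests plus a trailing $\var{U}$ finishes the job. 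This is arguably cleaner: the paper's Diophantine construction must separately guarantee $\var{x}=\var{w_x}=\var{w'_x}=\var{w''_x}$ (a somewhat delicate step in its $E_{\text{check}}$), whereas your invariant is maintained by construction and the only nontrivial fact is injectivity of $\nu$. It is also pleasantly parallel to the paper's own proof for the \emph{positive}-lookahead case, which is a PCP reduction with a positional encoding and a halving trick; your use of $\$=!(a)$ lets you replace the halving gadget by a direct length test. The two points you flag do check out: the total consumption $K$ is a deterministic function of the index sequence (each block consumes $c^u_{i}|U|+d^u_{i}+c^v_{i}|V|+d^v_{i}$ where $|U|,|V|$ are determined by the preceding indices), so a witness word of the right length always exists; and since $!(!(E))$ succeeds exactly when $\sem{E}\neq\emptyset$ and returns the configuration unchanged, the two tests are pure length assertions leaving $U,V$ and the head position intact. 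Your explicit exclusion of the zero-iteration branch (which would otherwise accept $\epsilon$ unconditionally) is necessary and correctly handled.
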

\begin{proof}
We use the undecidability of checking if a given Diophantine equation has a solution of \emph{natural} numbers~\cite{Matiyasevich:1993}.

To illustrate our idea, let us consider a Diophantine equation and transform it to one where each coefficient is positive as follows:
\[
D: 2 x^3 - (x-1)y = -2 \quad \Rightarrow \quad 2 x^3 + y + 2 = x y.
\]
This has a solution e.g., $(x, y) = (2, 18)$.

We nondeterministically build $x$, $y$, $x^2$, $x^3$, and $xy$ in this order in inputs using the expression
\[
E_\text{GEN} =
(a^*)_x (a^*)_y
((\var{x^2}\var{x})_{x^2} (\var{w_x} a)_{w_x})^*
((\var{x^3}\var{x^2})_{x^3} (\var{w'_x} a)_{w'_x})^*
((\var{xy}\var{y})_{xy} (\var{w''_x} a)_{w''_x})^*
\]
as follows:
\[
\underbrace{\cdots}_{x}
\underbrace{\cdots}_{y}
\mid
\var{x}\,a~~~2\var{x}\ a^2\,\cdots\,
\underbrace{n\var{x}}_{{x^2}}
\underbrace{a^n}_{w_{x}}
\mid
\var{{x^2}}\,a
\,\cdots\,
\underbrace{m\var{{x^2}}}_{{x^3}}
\underbrace{a^m}_{w'_{x}}
\mid \\
\var{y}\,{a}
\,\cdots\,
\underbrace{l \var{y}}_{{xy}}
\underbrace{a^l}_{w''_{x}}
\]

If we can ensure $\var{x} = \var{w_x} = \var{w'_x} = \var{w''_x}$,
then $\var{x^2} = a^{x^2}$, $\var{x^3} = a^{x^3}$, and $\var{xy} = a^{x y}$ are derived.
Then, it suffices to performing $!!(\var{x^3} \var{x^3} \var{y} a a \$) !!(\var{xy} \$)$ to check $2 x^3 + y + 2 = xy$.
Here $!!(E)$ is a shortened version of $!(!(E))$, which means quasi positive lookaheads where we cannot update variables in such lookaheads.

To ensure $\var{x} = \var{w_x} = \var{w'_x} = \var{w''_x}$, we extend the above expression as follows:
\[
E_\text{check} = !!(\var{x^3} \var{x^3} \var{y} a a \$) !!(\var{w_{xy}} \$)\ a^*\ !!(\var{v_x} \$) !!(\var{w_x} \$) !!(\var{w'_x} \$) !!(\var{w''_x} \$)\ a^*.
\]
The entire expression is $E = E_\text{GEN} E_\text{check}$, and then $L(E) \neq \emptyset$ iff $D$ has a solution.
\end{proof}

\def\half{{}^1{\mskip -5mu/\mskip -3mu}_2}

Next, we focus on unary \prewb{}.
\begin{lemma}\label{thm:PCP-REWBpos}
The emptiness problem of unary \prewb{} is undecidable.
\end{lemma}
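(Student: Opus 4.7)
The plan is to reduce from Hilbert's Tenth Problem, following the overall strategy of the preceding \nrewb{} lemma but replacing every use of the end-of-input marker $\$$, which is unavailable in \prewb{}, by the halving trick developed in the paper's $L_{\mathit{2exp}} \in \prewb$ construction.

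Step one: given a Diophantine equation $P(\vec{x}) = Q(\vec{x})$ with positive integer coefficients, I would build a generation subexpression $E^{+}_{\text{GEN}}$ in which every operation is wrapped in positive lookaheads $?(\cdot)$, in the style of $E_{\mathit{2exp}}$ for \prewb{}. This subexpression nondeterministically guesses the values $x_1, \ldots, x_n$ and, by iterated gadgets of the form $?[(\var{u}\var{v})_{u}]$ inside Kleene-stars, produces variables $\var{p}$ and $\var{q}$ with $|\var{p}| = P(\vec{x})$ and $|\var{q}| = Q(\vec{x})$. Because the entire phase sits under positive lookaheads, no input is consumed but all variable updates persist. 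Alongside $\var{p}$ and $\var{q}$, the same gadget (updating the half only on alternate iterations) also builds half-length companions $\var{p_{1/2}}$ and $\var{q_{1/2}}$ of lengths $|\var{p}|/2$ and $|\var{q}|/2$.

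Step two: append a tail modeled on the $L_{\mathit{2exp}}$ lemma,
\[
\var{y}\;?(\var{p})\;?(\var{q})\;\var{p_{1/2}}\,\var{q_{1/2}},
\]
where $\var{y}$ is a padding variable whose length is also fixed inside $E^{+}_{\text{GEN}}$. The two positive lookaheads demand that the remaining suffix has length at least $\max(|\var{p}|, |\var{q}|)$ without consuming, and the two backreferences together consume exactly $|\var{p}|/2 + |\var{q}|/2$ characters. Combined with the \rewbl{} acceptance requirement that all input be consumed, the three-case analysis from the $L_{\mathit{2exp}}$ lemma (the strict-inequality cases either overshoot the end of input inside a lookahead or undershoot it with the backreferences) forces $|\var{p}| = |\var{q}|$.

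The main obstacle is that the \nrewb{} construction also uses several auxiliary $\$$-tests to force the `witness' variables (the analogs of $\var{w_x}, \var{w'_x}, \var{w''_x}$) to have the intended lengths during generation, and each such test must also be replaced by its own halving gadget. I would chain these gadgets sequentially after the principal equality tail, each carrying its own precomputed padding and half-companion, so that each subsequent gadget's consumption is measured relative to the position fixed by the preceding lookaheads and backreferences. Once every necessary length equality is pinned down in this way, the input length depends uniquely on the guessed values, and $L(E) \neq \emptyset$ holds iff the original Diophantine equation has a natural-number solution. Undecidability then follows from Matiyasevich's theorem.
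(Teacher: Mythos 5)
Your overall strategy---replace every $\$$-anchored length test from the \nrewb{} construction by a halving gadget---founders on a concrete point: sequentially chained halving gadgets do not force their equalities. The halving trick works for a \emph{single} check because the lookaheads demand that the remaining suffix have length at least $\max(|\var{p}|,|\var{q}|)$ while the backreferences consume exactly $(|\var{p}|+|\var{q}|)/2$, and full consumption of the input then forces these two quantities to coincide, i.e.\ $|\var{p}|=|\var{q}|$. But if a second gadget follows, the first gadget no longer ends at the end of the input: writing $s_i$ for the suffix length at the start of gadget $i$, the constraints become $s_i\ge\max(p_i,q_i)$ and $s_{i+1}=s_i-(p_i+q_i)/2$, with only $s_{k+1}=0$. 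This pins down $p_k=q_k$, but for $i<k$ it yields only $|p_i-q_i|\le 2\,s_{i+1}$. Concretely, with two gadgets and $p_2=q_2=10$, the values $p_1=4$, $q_1=8$ satisfy every constraint (take $s_1=16$), so the chained expression accepts a word even though the first equality fails. Your sentence about each gadget ``carrying its own precomputed padding \ldots{} measured relative to the position fixed by the preceding lookaheads'' sits exactly where a real construction is needed and does not supply one: there is only one end-of-input to anchor against, and a positive lookahead cannot by itself assert that a match reaches the end of the input---that is precisely the $\$$ you no longer have.

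This matters because the Diophantine route genuinely needs several auxiliary equalities: every multiplication $x\cdot y$ is computed by a Kleene-star loop whose iteration count must be synchronized with $x$ by a separate length test (these are the $\var{w_x},\var{w'_x},\var{w''_x}$ checks in the \nrewb{} proof), so the number of equalities to verify grows with the equation. The paper avoids this by reducing from PCP instead: each rule application is a \emph{fixed} update $\alpha\mapsto 10^{|u|}\alpha+u$, implemented by concatenating a constant number of backreferences inside one positive lookahead, so no loop synchronization is required and the whole reduction collapses to a single final equality $\alpha=\beta$, which one halving gadget anchored at the true end of input can check. To rescue your approach you would have to verify all $k$ equalities against that single anchor simultaneously---for instance by preceding the consumption of all the halves with the $2^k$ lookaheads $?(\var{r_1}\cdots\var{r_k})$ for every choice $r_i\in\{p_i,q_i\}$, which force the suffix length to be at least $\sum_i\max(p_i,q_i)$ while the halves consume $\sum_i(p_i+q_i)/2$. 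That can in principle be made to work, but it is a different (and exponentially larger) construction than the sequential chaining you describe.
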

\begin{proof}
Let us consider the following (binary) PCP instance:
\[
\begin{array}{c|c@{\quad}c@{\quad}c}
 & R_1 & R_2 & R_3 \\\hline
\alpha & a & ab & bba \\
\beta & baa & aa & bb
\end{array} \Longrightarrow
\begin{array}{c|c@{\quad}c@{\quad}c}
& R_1 & R_2 & R_3 \\\hline
\alpha & 2 & 24 & 442 \\
\beta & 422 & 22 & 44
\end{array} \qquad
\begin{tabular}{c}
\small We replace the characters by 2 and 4, \\
\small \ \ here $a \mapsto 2$ and $b \mapsto 4$. \\
\small It has a solution $R_3 R_2 R_3 R_1$; \\ 
\small $442\,24\,442\,2 = 44\,22\,44\,422$.
\end{tabular}
\]
Starting from the number 2, we can simulate applying each rule.
For example, apply $R_3 = 442$ of $\alpha$, we first multiply the current value by $10^{|442|=3}$ and then add $442$.
Repeatedly applying this, we obtain
\[
\alpha: 2 \xrightarrow{R_3} 2 \cdot  {10}^3 + 442 \xrightarrow{R_2} 2442 * {10}^2 + 24
\xrightarrow{R_3} 244224442 \xrightarrow{R_1} 2442244422,
\]
where we drop the leftmost 2, then we obtain the correct value 442244422.
To build an expression $E$ that satisfies $L(E) \neq \emptyset$ iff there is a solution of the PCP instance,
we take the following idea: 
\begin{enumerate}
\item Besides calculating $\alpha$ and $\beta$, we also compute their halves $\alpha_\half$ and $\beta_\half$.
\item If $\alpha = \beta$, $\alpha_\half + \beta_\half = \alpha = \beta$. Otherwise, $\alpha_\half + \beta_\half < \max{\alpha, \beta}$.
\item Therefore, it holds that $?(\alpha) ?(\beta) (\alpha_\half \beta_\half)$ matches $a^n$ iff the PCP instance has a solution, whose result is $a^n$. 
\item Please recall that \rewbl{} determines acceptance based on if it can consume the entire input.
\end{enumerate}

To initialize variables, we consider $E_{\text{init}} = ?(aa)_\alpha ?(a)_{\alpha_{\half}} ?(aa)_\beta ?(a)_{\beta_{\half}}$.
We use $E_{R_1} = \left(
\begin{array}{c}
?((10 \times \var{\alpha}) a^{2})_{\alpha}
?((10 \times \var{\alpha_{\half}}) a^{1})_{\alpha_{\half}} \\
?((10^3 \times \var{\beta}) a^{422})_{\beta}
?((10^3 \times \var{\beta_{\half}}) a^{211})_{\beta_{\half}}
\end{array}
\right)$ to reflect the rule $R_1$.
Also defining $E_{R_2}$ and $E_{R_3}$, then we just consider $E = E_\text{init} (E_{R_1} + E_{R_2} + E_{R_3})^*\ ?(\var{\alpha})\ ?(\var{\beta}) \var{\alpha_{\half}}\,\var{\beta_{\half}}$. Now, $L(E) \neq \emptyset$ iff the PCP has a solution.
\end{proof}

Combining these lemmas, we obtain the following theorem.
\begin{customthm}{4}
The emptiness problems of \prewb{} over a unary alphabet and \nrewb{} over a unary alphabet are undecidable.
\end{customthm}

\end{document}